\theoremstyle{plain}
\newtheorem{thm}{Theorem}[section]
\newtheorem{lem}{Lemma}[section]
\newtheorem{cor}{Corollary}[section]
\theoremstyle{definition}
\newtheorem{defn}{Definition}[section]
\theoremstyle{remark}
\newtheorem{rem}{Remark}
\newcommand{\thmref}[1]{Theorem~\ref{#1}}
\newcommand{\lemref}[1]{Lemma~\ref{#1}}
\newcommand{\corref}[1]{Corollary~\ref{#1}}
\newcommand{\defnref}[1]{Definition~\ref{#1}}
\newcommand{\figref}[1]{Figure~\ref{#1}}
\newcommand{\mylemma}[2]{\begin{lem}\label{#1}#2\end{lem}}
\newcommand{\mylem}[3]{\begin{lem}[#3]\label{#1}#2\end{lem}}
\newcommand{\againlemma}[2]{\noindent\textbf{Lemma~\ref{#1}}
    (from page \pageref{#1})\textbf{.}\emph{#2}}
\newcommand{\mytheorem}[2]{\begin{thm}\label{#1}#2\end{thm}}
\newcommand{\mythm}[3]{\begin{thm}[#3]\label{#1}#2\end{thm}}
\newcommand{\againtheorem}[2]{\noindent\textbf{Theorem~\ref{#1}}
    (from page \pageref{#1})\textbf{.}\emph{#2}}
\newcommand{\defeq}{\stackrel{\text{\tiny def}}{=}}
\newcommand{\supp}{\mathrm{supp}}
\newcommand{\E}{\mathbb{E}}
\newcommand{\lb}{\left(}
\newcommand{\rb}{\right)}
\tikzset{
font=\footnotesize,
node distance=5mm,
every fit/.append style=text badly centered,
component/.style = {
	rounded corners=3mm, minimum width=1.5cm, minimum height=1cm, thick, draw=black,
	},
point/.style={coordinate},
txt/.style={minimum width=1cm, minimum height=1cm},
post/.style={rounded corners, >=stealth, ->, thick, black},
pre/.style={rounded corners, >=stealth, <-, thick, black},
ord/.style={rounded corners, thick, black}
}
\title{Randomness-Efficient Curve Samplers}
\author{Zeyu Guo\footnote{\texttt{zguo@caltech.edu}. Supported by
NSF CCF-1116111, NSF CCF-1038578 and BSF 2010120.}\\
Computer Science Department\\
California Institute of Technology\\ Pasadena, CA 91125}
\begin{document}

\maketitle

\begin{abstract}
Curve samplers are sampling algorithms that proceed by viewing the domain as a vector space over a finite field, and randomly picking a low-degree curve in it as the sample. Curve samplers exhibit a nice property besides the sampling property: the restriction of low-degree polynomials over the domain to the sampled curve is still low-degree. This property is often used in combination with the sampling property and has found many applications, including PCP constructions, local decoding of codes, and algebraic PRG constructions.

The randomness complexity of curve samplers is a crucial parameter for its applications. It is known that (non-explicit) curve samplers using $O(\log N+\log(1/\delta))$ random bits exist, where $N$ is the domain size and $\delta$ is the confidence error. The question of explicitly constructing randomness-efficient curve samplers was first raised in \cite{TU06} where they obtained curve samplers with near-optimal randomness complexity.

We present an explicit construction of low-degree curve samplers with {\em optimal} randomness complexity (up to a constant factor), sampling curves of degree $\left( m\log_q(1/\delta)\right)^{O(1)}$ in $\mathbb{F}_q^m$. Our construction is a delicate combination of several components, including extractor machinery, limited independence, iterated sampling, and list-recoverable codes.
\end{abstract}


\section{Introduction}

Randomness has numerous uses in computer science, and sampling is one of its most classical applications: Suppose we are interested in the size of a particular subset $A$ lying in a large domain $D$. Instead of counting the size of $A$ directly by enumeration, one can randomly draw a small sample from $D$ and calculate the density of $A$ in the sample. The approximated density is guaranteed to be close to the true density with probability $1-\delta$ where $\delta$ is very small, known as the {\em confidence error}. This sampling technique is extremely useful both in practice and in theory.

One class of sampling algorithms, known as {\em curve samplers}, proceed by viewing the domain as a vector space over a finite field, and picking a random low-degree curve in it.
Curve samplers exhibit the following nice property besides the sampling property: the restriction of low-degree polynomials over the domain to the sampled curve is still low-degree. This special property, combined with the sampling property, turns out to be useful in many settings, e.g local decoding of Reed-Muller codes and hardness amplification \cite{STV01}, PCP constructions \cite{AS98, ALMSS98, MR08}, algebraic constructions of pseudorandom-generators \cite{SU05,Uma03}, extractor constructions \cite{SU05, TU06}, and some pure complexity results (e.g. \cite{SU06}).

The problem of constructing explicit low-degree curve samplers was raised in \cite{TU06}. 
More specifically, we are looking for low-degree curve samplers with small sample complexity (polylogrithmic in the domain size) and confidence error (polynomially small in the domain size), and we focus on minimizing the randomness complexity.
The simplest way is picking a completely random low-degree curve whose sampling properties are guaranteed by tail bounds for limited independence. The randomness complexity of this method, however, is far from being optimal. The probabilistic method guarantees the existence of (non-explicit) low-degree curve samplers using $O(\log N+\log(1/\delta))$ random bits where $N$ is the domain size and $\delta$ is the confidence error. The real difficulty, however, is to find an explicit construction matching this bound. 

\subsection{Previous work}

Randomness-efficient samplers (without the requirement that the sample points form a curve) are constructed in \cite{CG89,Gil98,BR94,Zuc97}. In particular, \cite{Zuc97} obtains explicit samplers with optimal randomness complexity (up to a $1+\gamma$ factor for arbitrary small $\gamma>0$) using the connection between samplers and extractors. See \cite{G11} for a survey of samplers.

Degree-1 curve samplers are also called {\em line samplers}.
Explicit randomness efficient line samplers are constructed in the PCP literature \cite{BSVW03, MR08}, motivated by the goal of constructing almost linear sized PCPs. In \cite{BSVW03} line samplers are derandomized by picking a random point and a direction sampled from an $\epsilon$-biased set, instead of two random points. An alternative way is suggested in \cite{MR08} where directions are picked from a subfield.
It is not clear, however, how to apply these techniques to higher degree curves.

In \cite{TU06} it was shown how to explicitly construct derandomized curve samplers with near-optimal parameters. Formally they obtained
\begin{itemize}
\item curve samplers picking curves of degree $\lb\log\log N+\log(1/\delta)\rb^{O(\log\log N)}$ using $O(\log N+\log(1/\delta)\log\log N)$ random bits, and
\item curve samplers picking curves of degree $\lb\log (1/\delta)\rb^{O(1)}$ using  $O(\log N+\log(1/\delta)(\log\log N)^{1+\gamma})$ random bits for any constant $\gamma>0$
\end{itemize}
for domain size $N$, field size $q\geq (\log N)^{\Theta(1)}$ and confidence error $\delta=N^{-\Theta(1)}$.
Their work left the problem of explicitly constructing low-degree curve samplers (ideally picking curves of degree $O(\log_q(1/\delta))$) with essentially optimal $O(\log N+\log(1/\delta))$ random bits as a prominent open problem.

\subsection{Main results}

We present an explicit construction of low-degree curve samplers with optimal randomness complexity (up to a constant factor).
In particular, we show how to sample degree-$\lb m\log_q(1/\delta)\rb^{O(1)}$ curves in $\mathbb{F}_q^m$ using $O(\log N+\log(1/\delta))$ random bits for domain size $N=|\mathbb{F}_q^m|$ and confidence error $\delta=N^{-\Theta(1)}$.
Before stating our main theorem, we first present the formal definition of samplers and curve samplers.

\paragraph{Samplers.}
Given a finite set $\mathcal{M}$ as the domain, the {\em density} of a subset $A\subseteq \mathcal{M}$ is $\mu(A)\defeq\frac{|A|}{|\mathcal{M}|}$. For a collection of elements $\mathcal{T}=\{t_i: i\in I\}\in\mathcal{M}^I$ indexed by set $I$, the density of $A$ in $\mathcal{T}$ is $\mu_\mathcal{T}(A)\defeq\frac{|A\cap\mathcal{T}|}{|\mathcal{T}|}=\Pr_{i\in I}[t_i \in A]$.

\begin{defn}[sampler]
A {\em sampler} is a function $S: \mathcal{N}\times \mathcal{D}\to \mathcal{M}$ where $|\mathcal{D}|$ is its {\em sample complexity} and $\mathcal{M}$ is its {\em domain}.
We say $S$ samples $A \subseteq \mathcal{M}$ with {\em accuracy error} $\epsilon$ and {\em confidence error} $\delta$ if $\Pr_{x\in \mathcal{N}} [ |\mu_{S(x)}(A) - \mu(A)| > \epsilon] \leq \delta$ where $S(x)\defeq\{S(x,y): y\in \mathcal{D}\}$. We say $S$ is an $(\epsilon,\delta)$ sampler if it samples all subsets $A \subseteq \mathcal{M}$ with accuracy error $\epsilon$ and confidence error $\delta$. The {\em randomness complexity} of $S$ is $\log(|\mathcal{N}|)$.
\end{defn}

\begin{defn}[curve/line sampler]
Let $\mathcal{M}=\mathbb{F}_q^D$ and $\mathcal{D}=\mathbb{F}_q$.
The sampler $S: \mathcal{N}\times \mathcal{D} \to \mathcal{M}$ is a {\em degree-$t$ curve sampler} if for all $x\in \mathcal{N}$, the function $S(x,\cdot): \mathcal{D}\to \mathcal{M}$ is a curve (see \defnref{defn_manifold}) of degree at most $t$ over $\mathbb{F}_q$.
When $t=1$, $S$ is also called a {\em line sampler}.
\end{defn}

\begin{thm}[main]\label{thm_sampler_1}
For any $\epsilon,\delta>0$, integer $m\geq 1$, and sufficiently large prime power $q\geq\lb\frac{m \log(1/\delta)}{\epsilon}\rb^{\Theta(1)}$, there exists an explicit degree-$t$ curve sampler for the domain $\mathbb{F}_q^m$ with $t=\lb m\log_q(1/\delta)\rb^{O(1)}$, accuracy error $\epsilon$, confidence error $\delta$, sample complexity $q$, and randomness complexity $O\lb m\log q+\log (1/\delta)\rb=O(\log N+\log(1/\delta))$ where $N=q^m$ is the domain size.
Moreover, the curve sampler itself has degree $\lb m\log_q(1/\delta)\rb^{O(1)}$ as a polynomial map.
\end{thm}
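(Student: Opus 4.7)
The plan is to build the sampler in stages, combining a moment-method base construction with derandomization machinery tailored to preserve the low-degree curve structure at every step. The starting observation is that a completely random degree-$t$ curve with $t=O(\log_q(1/\delta))$ already is an $(\epsilon,\delta)$ curve sampler, via standard tail bounds for $t$-wise independence; the task is therefore not to find a sampler at all, but rather to choose a small, explicit subfamily of curves from which to draw, replacing the full space of size $q^{\Theta(mt)}$ with one of size $\mathrm{poly}(N,1/\delta)$.

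First I would set up a base curve sampler $S_{\mathrm{base}}$ whose coefficients come from a $t$-wise independent distribution in $\mathbb{F}_q^{mt}$; this already gives the claimed accuracy $\epsilon$ and confidence error $\delta$ but with randomness roughly $mt\log q$, which is a factor of $t$ too large. Then I would invoke Zuckerman's sampler--extractor correspondence to regard $S_{\mathrm{base}}$ as an averaging sampler and compose its seed with a randomness-efficient extractor-based sampler. Iterating this composition shrinks the randomness budget geometrically per level, so that only a constant number of layers is needed to reach $O(m\log q+\log(1/\delta))$.

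The central obstacle is that naive composition of samplers forgets the algebraic structure: the output becomes merely a set of sampled points, not the image of a low-degree polynomial map. To fix this I would use list-recoverable codes (concretely, folded Reed--Solomon codes over a sufficiently large alphabet, which is precisely why we require $q\ge \lb m\log(1/\delta)/\epsilon\rb^{\Theta(1)}$) as the gluing mechanism. At each iteration, the outer sampler produces a short list of candidate ``pieces'' of a curve; the list-recovery step merges them into the coefficient vector of a single polynomial curve in $\mathbb{F}_q^m$ of controlled degree. This simultaneously guarantees that (i) the output is a curve, (ii) its degree remains $\lb m\log_q(1/\delta)\rb^{O(1)}$, and (iii) the confidence error is controlled across the composition.

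The step I expect to be hardest is (iii): carefully tracking how accuracy and confidence errors propagate through the list-recovery step, and showing that one does not pay a multiplicative union bound over the list size at each level. Once that analysis is in hand, setting $t$, $q$, and the per-level parameters so that the curve degree blows up only by a constant factor per iteration yields the overall degree bound. The ``moreover'' clause about $S$ being a polynomial map of degree $\lb m\log_q(1/\delta)\rb^{O(1)}$ then follows, because each layer of the construction is itself a polynomial map of constant degree, so the composition is polynomial with the claimed degree.
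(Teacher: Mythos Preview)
Your proposal names the right ingredients---limited independence, the extractor--sampler correspondence, iterated sampling, and list-recoverable codes---but the way you assemble them has two genuine gaps.

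First, your account of list-recoverability is off. You describe it as a ``gluing mechanism'' in which the outer sampler produces candidate pieces of a curve and list-recovery merges them into a single coefficient vector. That is not its role. In the paper, list-recoverability is used purely for \emph{error reduction}: if $f$ is $(\epsilon,L,H)$ list-recoverable and $S$ is an $(\epsilon',L/q^m)$ sampler, then $S\star f$ defined by $(S\star f)(x,(y,y'))=S(f(x,y),y')$ is an $(\epsilon+\epsilon',H/q^n)$ sampler. The point is that a sampler with only mildly small confidence error becomes one with very small confidence error after pre-composing with the Reed--Solomon condenser. The curve structure survives simply because the condenser is itself a low-degree polynomial map, not because any ``merging'' takes place. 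Since your description of the mechanism is wrong, your step (iii)---the error propagation you flag as hardest---has no actual argument behind it.

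Second, you are missing the outer/inner decomposition the paper relies on, and in particular the \emph{block source extraction} machinery. The paper does not iterate a single composition to shrink randomness geometrically. Instead it first builds an outer sampler that already has optimal randomness $O(m\log q+\log(1/\delta))$ by converting the source into a block source (via the RS condenser applied with fresh seeds to obtain each block) and then feeding it to a block source extractor assembled from basic line samplers over extension fields. This outer sampler outputs an $O(\log m)$-dimensional manifold with large sample complexity $q^{O(\log m)}$; only then does an inner sampler---built recursively over $O(\log m)$ levels by \emph{alternating} error reduction and iterated sampling---bring the sample complexity down to $q$. Your plan of ``compose the seed with a randomness-efficient extractor-based sampler and iterate'' never explains where the first randomness-efficient sampler in the chain comes from; that is exactly the object being constructed. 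Finally, your claim that ``each layer of the construction is itself a polynomial map of constant degree'' is false: the RS condenser $\mathsf{RSCon}_{n,m,q}$ has degree $n$, and the basic curve sampler $\mathsf{Curve}_{m,t,q}$ has degree $t$, so the overall degree grows multiplicatively and must be tracked explicitly to arrive at the $(m\log_q(1/\delta))^{O(1)}$ bound.
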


\thmref{thm_sampler_1} has better degree bound and randomness complexity compared with the constructions in \cite{TU06}. We remark that the degree bound, being $\lb m\log_q(1/\delta)\rb^{O(1)}$, is still sub-optimal compared with the lower bound $\log_q(1/\delta)$ (see Appendix \ref{deglb} for the proof of this lower bound). However in many cases it is satisfying to achieve such a degree bound. 

As an example, consider the following setting of parameters: domain size $N=q^m$, field size $q=\lb\log N\rb^{\Theta(1)}$, confidence error $\delta=N^{-\Theta(1)}$, and accuracy error $\epsilon=\lb\log N\rb^{-\Theta(1)}$.
Note that this is the typical setting in PCP and other literature \cite{ALMSS98,AS98,STV01,SU05}.
In this setting, we have the following corollary in which the randomness complexity is logarithmic and the degree is polylogarithmic.

\begin{cor}\label{cor_sampler}
Given domain size $N=|\mathbb{F}_q^m|$, accuracy error $\epsilon=\lb\log N\rb^{-\Theta(1)}$, confidence error $\delta=N^{-\Theta(1)}$, and large enough field size $q=\lb\log N\rb^{\Theta(1)}$, there exists an explicit degree-$t$ curve sampler for the domain $\mathbb{F}_q^m$ with
accuracy error $\epsilon$, confidence error $\delta$, randomness complexity $O(\log N)$, sample complexity $q$, and $t\leq\lb\log N\rb^c$ for some constant $c>0$ independent of the field size $q$.
\end{cor}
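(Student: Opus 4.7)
The plan is to derive \corref{cor_sampler} as a direct specialization of \thmref{thm_sampler_1} to the given parameter regime. The argument is a parameter-matching calculation; no new construction or idea is required beyond what the main theorem already provides.

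I would begin by recording the basic arithmetic implied by the assumptions. From $N = q^m$ and $q = \lb\log N\rb^{\Theta(1)}$ we obtain $m = \Theta\lb\log N/\log\log N\rb$, and since $\log(1/\delta) = \Theta(\log N)$ we also have $\log_q(1/\delta) = \Theta\lb\log N/\log\log N\rb$. With $\epsilon = \lb\log N\rb^{-\Theta(1)}$ this gives $m\log(1/\delta)/\epsilon = \lb\log N\rb^{\Theta(1)}$, so the hypothesis $q \geq \lb m\log(1/\delta)/\epsilon\rb^{\Theta(1)}$ of \thmref{thm_sampler_1} is satisfied once the implicit $\Theta(1)$ in $q = \lb\log N\rb^{\Theta(1)}$ is chosen large enough relative to the hidden constants of the main theorem.

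Next I would substitute into the conclusion of \thmref{thm_sampler_1}. The quantity $m\log_q(1/\delta)$ is at most $\lb\log N\rb^{O(1)}$, so the degree bound $t = \lb m\log_q(1/\delta)\rb^{O(1)}$ simplifies to $t \leq \lb\log N\rb^c$ for an absolute constant $c$. The randomness complexity $O\lb m\log q + \log(1/\delta)\rb$ collapses to $O(\log N)$ since $m\log q = \log N$ and $\log(1/\delta) = \Theta(\log N)$. The accuracy error, confidence error, and sample complexity $q$ are inherited directly from \thmref{thm_sampler_1}.

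The only subtlety worth flagging is the corollary's insistence that $c$ be independent of $q$. This is immediate, however: the $O(1)$ in the degree bound of \thmref{thm_sampler_1} is a universal constant that is part of the construction itself, and its composition with the polylogarithmic growth of $m\log_q(1/\delta)$ in $\log N$ yields a single absolute exponent $c$. There is no substantive obstacle beyond bookkeeping.
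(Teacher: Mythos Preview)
Your proposal is correct and matches the paper's intent: the paper gives no separate proof of \corref{cor_sampler}, treating it as an immediate specialization of \thmref{thm_sampler_1} under the stated parameter regime. Your parameter calculations are accurate, and your remark on why $c$ is independent of $q$ is exactly the right observation.
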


It remains an open problem to explicitly construct curve samplers that have optimal randomness complexity $O(\log N+\log(1/\delta))$ (up to a constant factor), and sample curves with optimal degree bound $O(\log_q(1/\delta))$. It is also an interesting problem to achieve the optimal randomness complexity up to a $1+\gamma$ factor for any constant $\gamma>0$ (rather than just an $O(1)$ factor), as achieved by \cite{Zuc97} for general samplers. The standard techniques as in \cite{Zuc97} are not directly applicable as they increase the dimension of samples and only yield $O(1)$-dimensional manifold samplers.

\subsection{Techniques}

\paragraph{Extractor machinery.} It was shown in \cite{Zuc97} that samplers are equivalent to {\it extractors}, objects that convert weakly random distributions into almost uniform distributions.
Therefore the techniques of constructing extractors are extremely useful in constructing curve samplers.
Our construction employs the technique of block source extraction \cite{NZ96, Zuc97, SZ99}. In addition, we also use the techniques appeared in \cite{GUV09}, especially their constructions of {\it condensers}.

\paragraph{Limited independence.}
It is well known that points on a random degree-$(t-1)$ curve are $t$-wise independent. So we may simply pick a random curve and use tail inequalities to bound the confidence error. However, the sample complexity is too high, and hence we need to use the technique of {\em iterated sampling} to reduce the number of sample points.

\paragraph{Iterated sampling.}
Iterated sampling is a useful technique for constructing explicit randomness-efficient samplers \cite{BR94, TU06}. The idea is first picking a large sample from the domain and then draw a sub-sample from the previous sample.
The drawback of iterated sampling, however, is that it invests randomness twice while the confidence error does not shrink correspondingly. To remedy this problem, we add another ingredient into our construction, namely the technique of {\em error reduction}.

\paragraph{Error reduction via list-recoverable codes.}
We will use explicit list-recoverable codes \cite{GI01}, a strengthening of list-decodable codes. More specifically, we will employ the list-recoverability from (folded) Reed-Solomon codes \cite{GR08, GUV09}.
List-recoverable codes provide a way of obtaining samplers with very small confidence error from those with mildly small confidence error. We refer to this transformation as {\em error reduction}, which plays a key role in our construction.

\subsection{Sketch of the construction}

Our curve sampler is the composition of two samplers which we call the {\em outer sampler} and the {\em inner sampler} respectively. The outer sampler picks manifolds (see \defnref{defn_manifold}) of dimension $O(\log m)$ from the domain $\mathcal{M}=\mathbb{F}_q^m$.
The outer sampler has near-optimal randomness complexity but the sample complexity is large. To fix this problem, we employ the idea of iterated sampling. Namely we regard the manifold picked by the outer sampler as the new domain $\mathcal{M}'$, and then construct an inner sampler picking a curve from $\mathcal{M}'$ with small sample complexity.

The outer sampler is obtained by constructing an extractor and then using the extractor-sampler connection \cite{Zuc97}.
We follow the approach in \cite{NZ96, Zuc97, SZ99}: Given an arbitrary random source with enough min-entropy, we will first use a {\em block source converter} to convert it into a {\em block source}, and then feed it to a {\em block source extractor}.
In addition, we need to construct these components carefully so as to maintain the low-degree-ness.
The way we construct the block source converter is different from those in \cite{NZ96, Zuc97, SZ99} (as they are not in the form of low-degree polynomial maps), and is based on the Reed-Solomon condenser proposed in \cite{GUV09}: To obtain one block, we simply feed the random source and a fresh new seed into the condenser, and let the output be the block.
We show that this indeed gives a block source.

The inner sampler is constructed using techniques of iterated sampling and error reduction.
We start with the basic curve samplers picking totally random curves, and then
apply the error reduction as well as iterated sampling techniques repeatedly to obtain the desired inner sampler. Either of the two operations improves one parameter while worsening some other one:
Iterated sampling reduces sample complexity but increases the randomness complexity, whereas error reduction reduces the confidence error but increases the sample complexity. Our construction applies the two techniques alternately such that (1) we keep the invariant that the confidence error is always exponentially small in the randomness complexity, and (2) the sample complexity is finally brought down to $q$. 

\paragraph{Outline.} The next section contains relevant definitions and some basic facts. Section 3 gives the construction of the outer sampler using block source extraction.
Section 4 introduces the techniques of error reduction and iterated sampling, and then uses them to construct the inner sampler. These components are finally put together in Section 5 and yield the curve sampler construction.

\section{Preliminaries}

We denote the set of numbers $\{1,2,\dots,n\}$ by $[n]$.
Given a prime power $q$, write $\mathbb{F}_q$ for the finite field of size $q$. Write $U_{n,q}$ for the uniform distribution over $\mathbb{F}_q^n$. Logarithms are taken with base 2 unless the base is explicitly specified.

Random variables and distributions are represented by upper-case letters whereas their specific values are represented by lower-case letters. Write $x \leftarrow X$ if $x$ is sampled according to distribution $X$.
The support of a distribution $X$ over set $S$ is $\supp(X)\defeq\{x\in S: \Pr[X=x]>0\}$. The statistical distance between distributions $X,Y$ over set $S$ is defined as $\Delta(X,Y)=\max_{T\subseteq S} |\Pr[X\in T]-\Pr[Y\in T]|$.
We say $X$ is {\em $\epsilon$-close} to $Y$ if $\Delta(X,Y)\leq \epsilon$.

For an event $A$, let $\mathbf{I}[A]$ be the indicator variable that evaluates to $1$ if $A$ occurs and $0$ otherwise. For a random variable $X$ and an event $A$ that occurs with nonzero probability, define the {\em conditional distribution} $X|_A$ by $\Pr[X|_A=x]=\frac{\Pr[(X=x) \wedge A]}{\Pr[A]}$.


\paragraph{Manifolds and curves.}
Let $f:\mathbb{F}_q^d \to \mathbb{F}_q^D$ be a polynomial map. We may view $f$ as $D$ individual polynomials $f_i : \mathbb{F}_q^d \to \mathbb{F}_q$ describing its operation on each output coordinate, i.e., $f(x) = (f_1(x), \dots , f_D(x))$ for all $x\in\mathbb{F}_q^d$.
Such maps are called {\em curves} or {\em manifolds}, depending on the dimension $d$.

\begin{defn}[manifold]\label{defn_manifold}
 A manifold in $\mathbb{F}_q^D$ is a polynomial map $M : \mathbb{F}_q^d \to \mathbb{F}_q^D$ where $M_1,\dots,M_D$ are $d$-variate polynomials over $\mathbb{F}_q$.
 We call $d$ the {\em dimension} of $M$. An 1-dimensional manifold is also called a curve.
 A curve of degree $1$ is also called a {\em line}. The degree of $M$ is $\deg(M)\defeq\max\{\deg(M_1),\dots,\deg(M_D))\}$.
\end{defn}

We need the following lemma, generalizing the one in \cite{TU06}. Its proof is deferred to the appendix.
\newcommand{\lemdegree}{
A manifold $f: \lb\mathbb{F}_{q^D}\rb^n\to\lb\mathbb{F}_{q^D}\rb^m$ of degree $t$, when viewed as a manifold $f: \lb\mathbb{F}_q^D\rb^n\to\lb\mathbb{F}_q^D\rb^m$, also has degree at most $t$.
}
\mylemma{lemdegree}{\lemdegree}

\paragraph{Basic line/curve samplers} The simplest line (resp. curve) samplers are those picking completely random lines (resp. curves), as defined below. We call them {\em basic line} (resp. {\em curve}) {\em samplers}.

\begin{defn}[basic line sampler]\label{defn_line_sampler}
For $m\geq 1$ and prime power $q$, let $\mathsf{Line}_{m,q}:\mathbb{F}_q^{2m}\times\mathbb{F}_q\to\mathbb{F}_q^m$ be the line sampler that picks a completely random line in $\mathbb{F}_q^m$. Formally,
$$
\mathsf{Line}_{m,q}((a,b),y)\defeq (a_1 y+b_1,\dots,a_m y+b_m)
$$
for $a=(a_1,\dots,a_m), b=(b_1,\dots,b_m)\in\mathbb{F}_q^m$ and $y\in\mathbb{F}_q$.
\end{defn}

\begin{defn}[basic curve sampler]
For $m\geq 1$, $t\geq 4$ and prime power $q$, let $\mathsf{Curve}_{m,t,q}:\mathbb{F}_q^{tm}\times\mathbb{F}_q\to\mathbb{F}_q^m$ be the curve sampler that picks a completely random curve of degree $t-1$ in $\mathbb{F}_q^m$. Formally,
$$
\mathsf{Curve}_{m,t,q}((c_0,\dots,c_{t-1}),y)\defeq \lb \sum_{i=0}^{t-1} c_{i,1} y^i,\dots,\sum_{i=0}^{t-1} c_{i,m} y^i\rb
$$
for each $c_0=(c_{0,1},\dots,c_{0,m}),\dots,c_{t-1}=(c_{t-1,1},\dots,c_{t-1,m})\in\mathbb{F}_q^m$ and $y\in\mathbb{F}_q$.
\end{defn}

\begin{rem}
Note that $\mathsf{Line}_{m,q}$ has degree $2$
and $\mathsf{Curve}_{m,t,q}$ has degree $t$ as polynomial maps.
\end{rem}

Tail inequalities for limited independent random variables imply the following lemmas stating that basic line/curve samplers are indeed good samplers.

\newcommand{\lemlinesampler}{
For $\epsilon>0$, $m\geq 1$ and prime power $q$, $\mathsf{Line}_{m,q}$ is an $\lb\epsilon,\frac{1}{\epsilon^2 q}\rb$ line sampler.
}
\mylemma{lemlinesampler}{\lemlinesampler}

\newcommand{\lemcurvesampler}{
For $\epsilon>0$, $m\geq 1$, $t\geq 4$ and sufficiently large prime power $q=(t/\epsilon)^{O(1)}$, $\mathsf{Curve}_{m,t,q}$ is an $\lb\epsilon,q^{-t/4}\rb$ sampler.
}
\mylemma{lemcurvesampler}{\lemcurvesampler}

See the appendix for the proofs of \lemref{lemlinesampler} and \lemref{lemcurvesampler}.

\paragraph{Extractors and condensers.}
A {\em (seeded) extractor} is an object that takes an imperfect random variable called the {\em (weakly) random source}, invests a small amount of randomness called the {\em seed}, and produces an output whose distribution is very close to the uniform distribution.

\begin{defn}[$q$-ary min-entropy]
We say $X$ has {\em $q$-ary min-entropy} $k$ if for any $x\in S$, it holds that $\Pr[X=x]\leq q^{-k}$ (or equivalently, $X$ has min-entropy $k\log q$).
\end{defn}

\begin{defn}[condenser/extractor]
Given a function $f : \mathbb{F}_q^n \times \mathbb{F}_q^d \to \mathbb{F}_q^m$, we say $f$ is an {\em $k_1 \to_{\epsilon,q} k_2$ condenser} if for every distribution $X$ with $q$-ary min-entropy $k_1$, $f(X, U_{d,q})$ is $\epsilon$-close to a distribution with $q$-ary min-entropy $k_2$.
We say $f$ is a {\em  $(k,\epsilon,q)$ extractor} if it is a $k \to_{\epsilon,q} m$ condenser.
\end{defn}

\begin{rem}
We are interested in extractors and samplers that are polynomial maps. For such an object $f$, we denote by $\deg(f)$ its degree as a polynomial map.
\end{rem}


The following connection between extractors and samplers was observed in \cite{Zuc97}.

\newcommand{\thmequiv}{
Given a map $f:\mathbb{F}_q^n \times \mathbb{F}_q^d \to \mathbb{F}_q^m$, we have the following:
\begin{enumerate}
\item If $f$ is a $(k,\epsilon,q)$ extractor, then it is also an $(\epsilon, \delta)$ sampler where $\delta=2q^{k-n}$.
\item If $f$ is an $(\epsilon/2,\delta)$ sampler where $\delta=\epsilon q^{k-n}$, then it is also a $(k,\epsilon,q)$ extractor.
\end{enumerate}
}
\mythm{equiv}{\thmequiv}{\cite{Zuc97}, restated}

See the appendix for the proof.

\section{Outer sampler}

In this section we construct a sampler whose randomness complexity is optimal up to a constant factor.
We refer to it as the ``outer sampler".

We need the machinery of block source extraction.

\begin{defn}[block source \cite{CG88}]
A random source $X=(X_1,\dots, X_s)$ over $\mathbb{F}_q^{n_1}\times\dots\times\mathbb{F}_q^{n_s}$ is a {\em $(k_1,\dots,k_s)$ $q$-ary block source} if for any $i\in [s]$ and $(x_1,\dots,x_{i-1})\in\supp(X_1,\dots,X_{i-1})$, the conditional distribution $X_i|_{X_1=x_1,\dots,X_{i-1}=x_{i-1}}$ has $q$-ary min-entropy $k_i$. Each $X_i$ is called a {\em block}.
\end{defn}

\begin{defn}[block source extractor]
A function $E : (\mathbb{F}_q^{n_1}\times\cdots\times\mathbb{F}_q^{n_s}) \times \mathbb{F}_q^d \to \mathbb{F}_q^m$ is called a {\em $((k_1,\dots,k_s),\epsilon,q)$ block source extractor} if for any $(k_1,\dots,k_s)$ $q$-ary block source $(X_1,\dots, X_s)$ over $\mathbb{F}_q^{n_1}\times\dots\times\mathbb{F}_q^{n_s}$, the distribution $E((X_1,\dots,X_s), U_{d,q})$ is $\epsilon$-close to $U_{m,q}$.
\end{defn}

One nice property of block sources is that their special structure allows us to compose several extractors and get a block source extractor with only a small amount of randomness invested.

\begin{defn}[block source extraction via composition]\label{def_block_ext_comp}
Let $s\geq 1$ be an integer and $E_i:\mathbb{F}_q^{n_i}\times \mathbb{F}_q^{d_i} \to \mathbb{F}_q^{m_i}$ be a map for each $i\in [s]$.
Suppose that $m_i\geq d_{i-1}$ for all $i\in [s]$, where we set $d_0 = 0$.
Define $E=\mathsf{BlkExt}(E_1,\dots,E_s)$ as follows:
$$
\begin{aligned}
E:(\mathbb{F}_q^{n_1}\times\dots\times\mathbb{F}_q^{n_s})\times\mathbb{F}_q^{d_s} &\to (\mathbb{F}_q^{m_1-d_0}\times\dots\times\mathbb{F}_q^{m_s-d_{s-1}})\\
 ((x_1, \dots , x_s ), y_s ) &\mapsto (z_1, \dots , z_s)
\end{aligned}
$$
where for $i=s, \dots, 1$, we iteratively define $(y_{i-1},z_i)$ to be a partition of $E_i(x_i, y_i)$ into the prefix $y_{i-1}\in\mathbb{F}_q^{d_{i-1}}$ and the suffix $z_i\in\mathbb{F}_q^{m_i-d_{i-1}}$.
\end{defn}

See \autoref{fig_blk_comp} for an illustration of the above definition.

\begin{figure}[hbt]
  \centering
  \begin{tikzpicture}

\node (c1) [component] {$E_s$}; 
\node (p2) [right=of c1, point] {};
\node (c2) [right=of p2, component] {$E_{s-1}$}; 
\node (p3) [right=of c2, point] {};
\node (p4) [right=of p3, txt] {$\cdots$};
\begin{scope}[node distance=1cm]
\node (p4a) [above=of p4, label] {$\cdots$};
\node (p4b) [below=of p4, label] {$\cdots$};
\end{scope}
\node (p5) [right=of p4, point] {};
\node (c3) [right=of p5, component] {$E_2$}; 
\node (p7) [right=of c3, point] {};
\node (c4) [right=of p7, component] {$E_1$};

\draw [pre] (c1.west) -- node [above, very near end] {$Y_s$} +(-1.5cm,0);
\draw [pre] (c1.north) -- node [left, very near end] {$X_s$} +(0,1.5cm);
\draw [pre] (c2.north) -- node [left, very near end] {$X_{s-1}$} +(0,1.5cm);
\draw [pre] (c3.north) -- node [left, very near end] {$X_2$} +(0,1.5cm);
\draw [pre] (c4.north) -- node [left, very near end] {$X_1$} +(0,1.5cm);
\draw [post] (c1.240) -- node [left, very near end] {$Z_s$} +(0,-1.5cm);
\draw [post] (c2.240) -- node [left, very near end] {$Z_{s-1}$} +(0,-1.5cm);
\draw [post] (c3.240) -- node [left, very near end] {$Z_2$} +(0,-1.5cm);
\draw [post] (c4.south) -- node [left, very near end] {$Z_1$} +(0,-1.5cm);

\draw [post] (c1.300) -- +(0,-0.3cm) -| node [below, near start] {$Y_{s-1}$} (p2)  -- (c2.west);
\draw [post] (c2.300) -- +(0,-0.3cm) -| node [below, near start] {$Y_{s-2}$} (p3)  -- (p4.west);
\draw [post] ($(p4.300) + (0,-0.3cm)$) -| node [below, near start] {$Y_2$} (p5) -- (c3.west);
\draw [post] (c3.300) -- +(0,-0.3cm) -| node [below, near start] {$Y_1$} (p7)  -- (c4.west);

\node [component, inner sep=0.8cm, label={[label distance=-0.8cm] 90:$\mathsf{BlkExt}(E_1,\dots,E_s)$}, fit=(c1) (c2)
(c3) (c4)] {};
\end{tikzpicture}
  \caption{The composed block source extractor $\mathsf{BlkExt}(E_1,\dots,E_s)$.}
  \label{fig_blk_comp}
\end{figure}
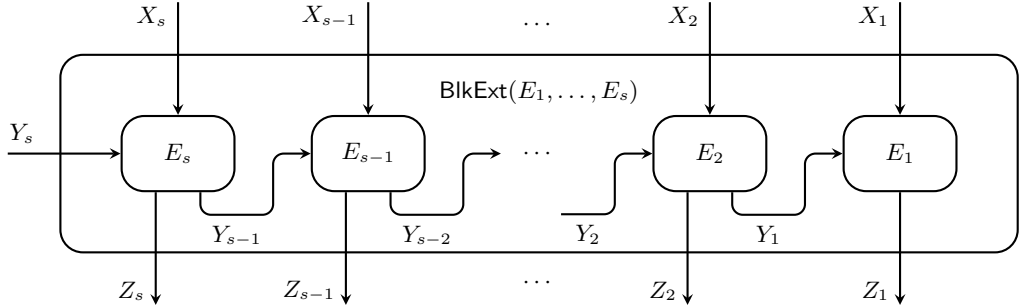

\newcommand{\lemblockextraction}{
Let $s\geq 1$ be an integer and for each $i\in [s]$, let $E_i:\mathbb{F}_q^{n_i}\times \mathbb{F}_q^{d_i} \to \mathbb{F}_q^{m_i}$ be a $(k_i, \epsilon_i, q)$ extractor of degree $t_i\geq 1$.
Then $\mathsf{BlkExt}(E_1,\dots,E_s)$ is a $((k_1,\dots,k_s),\epsilon,q)$ block source extractor of degree $t$ where $\epsilon = \sum_{i=1}^s \epsilon_i$ and $t=\prod_{i=1}^s t_i$.
}
\mylemma{lemblockextraction}{\lemblockextraction}

The proof is deferred to the appendix.

\subsection{Block source conversion}

\begin{defn}[block source converter \cite{NZ96}]
A function $C : \mathbb{F}_q^n \times \mathbb{F}_q^d \to \mathbb{F}_q^{m_1}\times\cdots\times\mathbb{F}_q^{m_s}$ is a {\it $(k,(k_1,\dots,k_s),\epsilon,q)$ block source converter} if for any random source $X$ over $\mathbb{F}_q^n$ with $q$-ary min-entropy $k$, the output $C(X,U_{d,q})$ is $\epsilon$-close to a $(k_1,\dots,k_s)$ $q$-ary block source.
\end{defn}

It was shown in \cite{NZ96} that one can obtain a block by choosing a pseudorandom subset of bits of the random source. Yet the analysis is pretty delicate and cumbersome. Furthermore the resulting extractor does not have a nice algebraic structure. We observe that the following condenser from Reed-Solomon codes in \cite{GUV09} can be used to obtain blocks and is a low-degree manifold.

\begin{defn}[condenser from Reed-Solomon codes \cite{GUV09}]\label{defn_reed_solomon}
Let $\zeta\in\mathbb{F}_q$ be a generator of the multiplicative group $\mathbb{F}_q^\times$. 
Define $\mathsf{RSCon}_{n,m,q}:\mathbb{F}_q^n \times \mathbb{F}_q \to \mathbb{F}_q^{m}$ for $n,m\geq 1$ and prime power $q$:
$$
\mathsf{RSCon}_{n,m,q}(x,y)=\lb y,f_x(y),f_x(\zeta y),\dots, f_x(\zeta^{m-2} y) \rb
$$
where $f_x(Y)=\sum_{i=0}^{n-1} x_i Y^i$ for $x=(x_0,x_1,\dots, x_{n-1})\in\mathbb{F}_q^n$.
\end{defn}

\begin{thm}[\cite{GUV09}]\label{thm-condenser}
$\mathsf{RSCon}_{n,m,q}$ is a
$m \to_{\epsilon,q} 0.99 m $ condenser for large enough $q\geq (n/\epsilon)^{O(1)}$.
\end{thm}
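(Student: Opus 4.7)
The plan is to follow the list-decoding paradigm of \cite{GUV09} based on Parvaresh--Vardy / folded Reed--Solomon codes. I would proceed in three steps: reduce condensing failure to a set-hitting statement, establish a list-decoding bound for the Reed--Solomon shifts, and combine these with the min-entropy hypothesis to derive a contradiction.

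For the reduction, suppose towards contradiction that for some source $X$ of $q$-ary min-entropy $m$, the output $\mathsf{RSCon}_{n,m,q}(X, U_{1,q})$ is $\epsilon$-far from every distribution of $q$-ary min-entropy $0.99m$. By the standard ``heavy outputs'' characterization of closeness to a flat distribution, this produces a set $B \subseteq \mathbb{F}_q^m$ with $|B| \leq q^{0.99m}$ such that the output lands in $B$ with probability exceeding $\epsilon$. Writing $L_y := \{x \in \mathbb{F}_q^n : (f_x(y), f_x(\zeta y), \ldots, f_x(\zeta^{m-2}y)) \in B_y\}$ where $B_y$ denotes the $y$-fiber of $B$, the min-entropy condition $\Pr[X=x] \leq q^{-m}$ yields
$$
\epsilon \,<\, \Pr[\mathsf{RSCon}_{n,m,q}(X,U_{1,q}) \in B] \,\leq\, \frac{1}{q^{m+1}} \sum_{y \in \mathbb{F}_q} |L_y|.
$$

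For the list-decoding bound, I would use Parvaresh--Vardy style interpolation: construct a nonzero polynomial $Q(Y_0, Y_1, \ldots, Y_{m-1})$ of appropriately bounded $(1, n, n, \ldots, n)$-weighted degree that vanishes on $\{(y, \beta) : y \in \mathbb{F}_q,\ \beta \in B_y\}$, which is guaranteed by a monomial-counting argument once the weighted degree is taken just large enough compared to $|B|$. For every $x$ appearing in $L_y$ for some $y$, the composed univariate polynomial $Q(Y, f_x(Y), f_x(\zeta Y), \ldots, f_x(\zeta^{m-2}Y))$ vanishes at that $y$; provided the weighted degree of $Q$ is chosen so that this composed polynomial has degree below $q$, having it vanish at sufficiently many $y$'s forces it to be identically zero, which in turn translates to a rigid algebraic constraint on the coefficient vector $x$. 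Counting the $x$'s that satisfy this constraint gives an upper bound on $\sum_y |L_y|$ that, combined with the displayed inequality, contradicts the failure assumption when $q \geq (n/\epsilon)^{O(1)}$.

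The hard part will be the interpolation step: the weighted-degree parameters of $Q$ must be simultaneously large enough that a nonzero $Q$ vanishing on the prescribed set exists by dimension counting, and small enough that the composed univariate polynomial has degree well below $q$ so that the vanishing argument can force it to be identically zero. The threshold $0.99m$ in the conclusion is exactly what emerges from balancing these two requirements; a sharper output entropy rate would require more delicate techniques such as folded Reed--Solomon codes \cite{GR08}. The polynomial slack $q \geq (n/\epsilon)^{O(1)}$ absorbs the error terms that accumulate in both the counting and the heavy-outputs reduction.
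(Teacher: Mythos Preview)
The paper does not prove this statement; it is quoted from \cite{GUV09} without argument. Your outline does follow the GUV approach, but the reduction step has a structural gap. The displayed inequality $\epsilon < q^{-(m+1)}\sum_y |L_y|$ is valid, yet it is too weak to close the argument: in replacing each $\Pr[X=x]$ by the bound $q^{-m}$ you have effectively summed over all $x\in\mathbb{F}_q^n$, and $\sum_y|L_y|$ now counts every pair $(x,y)\in\mathbb{F}_q^n\times\mathbb{F}_q$ with $\mathsf{RSCon}_{n,m,q}(x,y)\in B$. Since $x\mapsto(f_x(y),\dots,f_x(\zeta^{m-2}y))$ is surjective for each nonzero $y$, this sum is already of order $q^{n-m+1}|B|\gg \epsilon q^{m+1}$ whenever $n>1.01m$, before any interpolation is invoked. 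The interpolation step does not help here: it bounds the number of $x$ for which the composed polynomial is identically zero (call this set $\mathrm{LIST}$), not the number of pairs; for the remaining $q^n-|\mathrm{LIST}|$ values of $x$ the composed polynomial can still have up to $\deg Q$ roots each, so their contribution to $\sum_y|L_y|$ is of order $q^n\cdot\deg Q$.

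The fix, as in \cite{GUV09}, is to retain the min-entropy hypothesis until after isolating the list: write
\[
\Pr[\mathsf{RSCon}_{n,m,q}(X,U_{1,q}) \in B] \;\leq\; \Pr[X\in\mathrm{LIST}] + \epsilon' \;\leq\; |\mathrm{LIST}|\cdot q^{-m} + \epsilon',
\]
with $\mathrm{LIST}=\{x:\Pr_y[\mathsf{RSCon}_{n,m,q}(x,y)\in B]>\epsilon'\}$, and only then apply the interpolation bound $|\mathrm{LIST}|\leq |B|\cdot\mathrm{poly}(n,1/\epsilon')\leq q^{0.99m}\cdot q^{0.01m}$ for $q\geq(n/\epsilon)^{O(1)}$. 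Your sentence ``Counting the $x$'s that satisfy this constraint gives an upper bound on $\sum_y |L_y|$'' is precisely where the pair count and the list count get conflated; swap in the $\mathrm{LIST}$-based inequality above and the rest of your plan is sound.
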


\begin{rem}
The condenser $\mathsf{RSCon}_{n,m,q}(x,y)$ is a degree-$n$ manifold, as each monomial in any of its coordinate is of the form $y$ or $x_i(\zeta^j y)^i$ for some $0\leq i\leq n-1$.
\end{rem}

We apply the above condenser to the source with independent seeds to obtain a block source.

\begin{defn}[block source converter via condensing]\label{defn-blkcont}
For integers $n,m_1,\dots,m_s\geq 1$ and prime power $q$, define the function $\mathsf{BlkCnvt}_{n,(m_1,\dots,m_s),q}:\mathbb{F}_q^n\times\mathbb{F}_q^s\to \mathbb{F}_q^{m_1+\dots+m_s}$ by
$$
\begin{aligned}
\mathsf{BlkCnvt}_{n,(m_1,\dots,m_s),q}(x,y)
=(\mathsf{RSCon}_{n,m_1,q}(x,y_1),\dots,\mathsf{RSCon}_{n,m_s,q}(x,y_s))
\end{aligned}
$$
for $x\in\mathbb{F}_q^n$ and $y=(y_1,\dots,y_s)\in\mathbb{F}_q^s$.
\end{defn}
The function $\mathsf{BlkCnvt}_{n,(m_1,\dots,m_s),q}$ is indeed a block source converter. The intuition is that conditioning on the values of the previous blocks, the random source $X$ still has enough min-entropy, and hence we may apply the condenser to get the next block. Formally, we have the following statement whose proof is deferred to the appendix.

\newcommand{\thmconverter}{
For $\epsilon>0$, integers $s,n,m_1,\dots,m_s\geq 1$ and sufficiently large prime power $q=(n/\epsilon)^{O(1)}$, the function $\mathsf{BlkCnvt}_{n,(m_1,\dots,m_s),q}$ is a $(k,(k_1,\dots,k_s),3s\epsilon,q)$ block source converter of degree $n$ where $k = \sum_{i=1}^s m_i + \log_q(1/\epsilon)$ and each $k_i=0.99 m_i$.
}
\mytheorem{thmconverter}{\thmconverter}

\subsection{Construction of the outer sampler}

By \lemref{lemlinesampler} and \thmref{equiv}, the basic line samplers are also extractors.

\begin{lem}\label{lem_line_extractor}
For $\epsilon>0$, $m\geq 1$ and prime power $q$, $\mathsf{Line}_{m,q}$ is a $(k,\epsilon,q)$ extractor of degree $2$ where $k=2m-1+3\log_q(1/\epsilon)$.
\end{lem}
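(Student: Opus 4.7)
The plan is to prove \lemref{lem_line_extractor} by directly invoking the sampler-to-extractor connection of \thmref{equiv}, applied to the basic line sampler guarantee of \lemref{lemlinesampler}. Since $\mathsf{Line}_{m,q}$ has input space $\mathbb{F}_q^{2m}\times\mathbb{F}_q$, we are in the situation of \thmref{equiv} with $n=2m$ and seed length $d=1$.

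First I would record the sampling property with accuracy error $\epsilon/2$. By \lemref{lemlinesampler} (and, more tightly, by Chebyshev applied to the pairwise-independent random variables $\mathbf{I}[\mathsf{Line}_{m,q}((a,b),y)\in A]$ for $y\in\mathbb{F}_q$, whose variance is at most $1/4$), $\mathsf{Line}_{m,q}$ is an $(\epsilon/2,\delta')$ sampler with $\delta' \leq \frac{1}{\epsilon^2 q}$.

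Second, I would check the hypothesis of part (2) of \thmref{equiv} for the target extractor parameters $(k,\epsilon,q)$. That hypothesis demands $\delta' \leq \epsilon\, q^{k-n} = \epsilon\, q^{k-2m}$. Substituting the bound on $\delta'$, this becomes $q^{k-2m+1}\geq 1/\epsilon^{3}$, i.e.\ $k \geq 2m-1 + 3\log_q(1/\epsilon)$, which is exactly the assumed value of $k$. Thus \thmref{equiv} yields that $\mathsf{Line}_{m,q}$ is a $(k,\epsilon,q)$ extractor.

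Finally, the degree bound is immediate from the explicit formula in \defnref{defn_line_sampler}: each output coordinate $a_i y + b_i$ is a polynomial of total degree $2$ in the variables $(a,b,y)$, so $\deg(\mathsf{Line}_{m,q})=2$ as a polynomial map. There is no real obstacle here; the only point worth being careful about is the constant in the variance estimate, since using the crude bound $\delta'\leq \frac{4}{\epsilon^2 q}$ from a direct substitution into \lemref{lemlinesampler} would introduce an additive $\log_q 4$ term, whereas the sharper Chebyshev bound $p(1-p)\leq 1/4$ recovers exactly the stated value $k = 2m-1 + 3\log_q(1/\epsilon)$.
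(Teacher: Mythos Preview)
Your proposal is correct and follows exactly the approach the paper intends: the paper simply states that \lemref{lem_line_extractor} follows from \lemref{lemlinesampler} and \thmref{equiv} without spelling out the computation. Your write-up fills in those details accurately, and your observation that the sharper variance bound $p(1-p)\le 1/4$ is needed to recover the exact value $k=2m-1+3\log_q(1/\epsilon)$ (rather than incurring an extra $\log_q 4$) is a valid refinement that the paper leaves implicit.
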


We employ \lemref{lemblockextraction} and compose the basic line samplers to get a block source extractor. It is then applied to a block source obtained from the block source converter.

\begin{defn}[Outer Sampler]\label{defn_outer_samp}
For $\delta>0$, $m=2^s$ and prime power $q$, let $n=4m+\left\lceil\log_q(2/\delta)\right\rceil$, $d=s+1$, and
$d_i=2^{s-i}$ for $i\in [s]$.
For $i\in [s]$, view $\mathsf{Line}_{2,q^{d_i}}:\mathbb{F}_{q^{d_i}}^4\times\mathbb{F}_{q^{d_i}}\to\mathbb{F}_{q^{d_i}}^2$ as a manifold over $\mathbb{F}_q$: $\mathsf{Line}_{2,q^{d_i}}:\mathbb{F}_q^{4d_i}\times\mathbb{F}_q^{d_i}\to\mathbb{F}_q^{2d_i}$.
Composing these line samplers $\mathsf{Line}_{2,q^{d_i}}$ for $i\in [s]$ gives the function $\mathsf{BlkExt}(\mathsf{Line}_{2,q^{d_1}},\dots,\mathsf{Line}_{2,q^{d_s}}):\mathbb{F}_q^{4d_1+\dots+4d_s}\times\mathbb{F}_q\to\mathbb{F}_q^m$.
Finally, define $\mathsf{OuterSamp}_{m,\delta,q}:\mathbb{F}_q^n\times\mathbb{F}_q^d\to\mathbb{F}_q^m$:
$$
\mathsf{OuterSamp}_{m,\delta,q}(x,(y,y'))\defeq\mathsf{BlkExt}(\mathsf{Line}_{2,q^{d_1}},\dots,\mathsf{Line}_{2,q^{d_s}})\big(\mathsf{BlkCnvt}_{n,(4d_1,\dots,4d_s),q}(x,y),y'\big)
$$
for $x\in\mathbb{F}_q^n$, $y\in\mathbb{F}_q^s$ and $y'\in\mathbb{F}_q$.
\end{defn}

See \figref{fig_outer} for an illustration of the above construction.

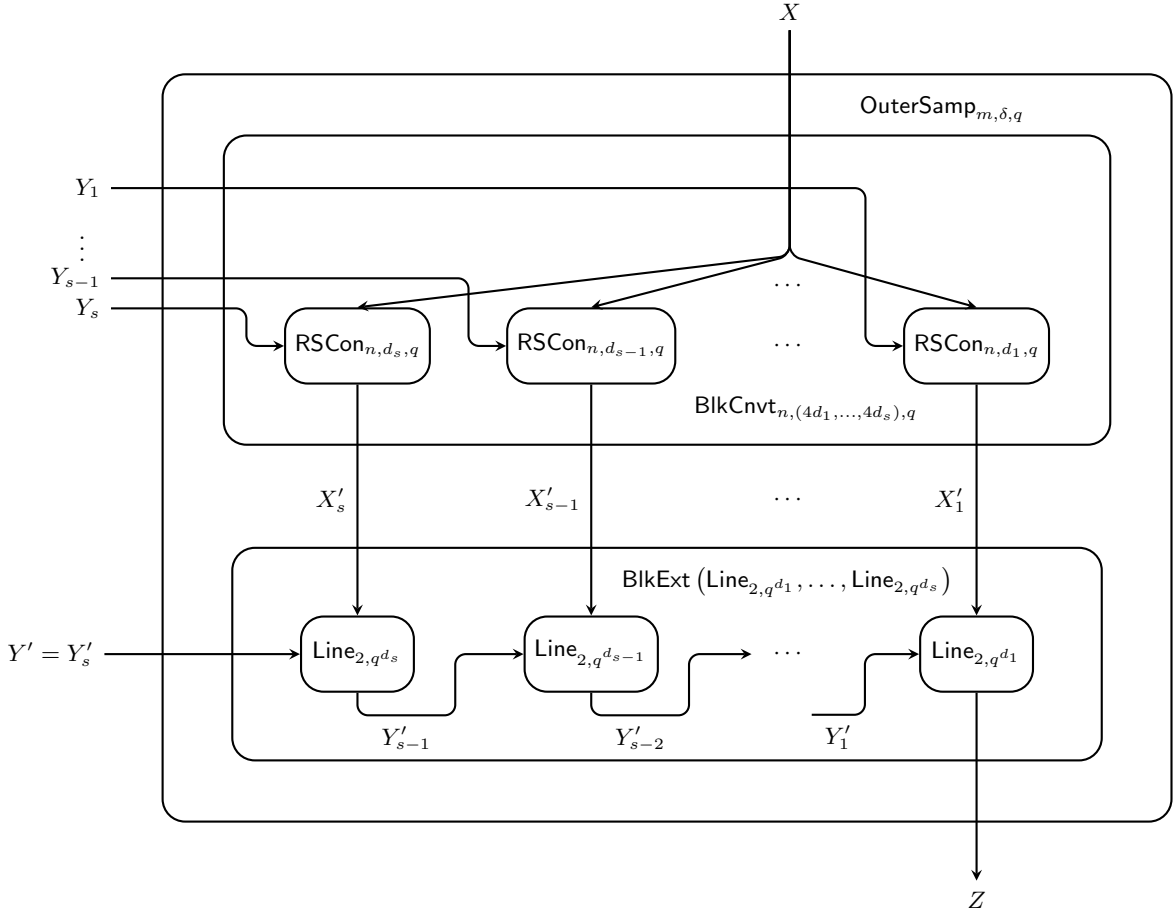
\begin{figure}[hbt]
  \centering
  \begin{tikzpicture}
\node (p1) [point] {};
\node (c1) [right=of p1, component] {$\mathsf{RSCon}_{n,d_s,q}$};
\node (p2) [right=of c1, point] {};
\node (c2) [right=of p2, component] {$\mathsf{RSCon}_{n,d_{s-1},q}$};
\node (p3) [right=of c2, point] {};
\node (p4) [right=of p3, txt] {$\cdots$};
\draw ($(p4)+(0,0.8cm)$) node (p4a) {$\cdots$};
\draw ($(p4)+(0,2cm)$) node [point] (upper) {};
\draw ($(p4)+(0,1.2cm)$) node [point] (input) {};
\node (p5) [right=of p4, point] {};
\node (c4) [right=of p5, component] {$\mathsf{RSCon}_{n,d_1,q}$};

\draw [pre] (c1.north) -- (input) -- +(0,3cm) node [anchor=south] {$X$};
\draw [pre] (c2.north) -- (input) -- +(0,3cm);
\draw [pre] (c4.north) -- (input) -- +(0,3cm);

\draw [pre] (c1.west) -- (p1) |- +(-1.8cm,+0.5cm)  node [anchor=east] {$Y_s$};
\draw [pre] (c2.west) -- (p2) |- ($(p1)+(-1.8cm,+0.9cm)$)  node [anchor=east] {$Y_{s-1}$};
\draw ($(p1)+(-2cm,+1.4cm)$) node [anchor=east] {$\vdots$};
\draw [pre] (c4.west) -- (p5) |- ($(p1)+(-1.8cm,+2.1cm)$)  node [anchor=east] {$Y_1$};

\node [component, inner sep=0.8cm, label={[label distance=-0.8cm] 280:$\mathsf{BlkCnvt}_{n,(4d_1,\dots,4d_s),q}$}, fit=(c1) (c2)
(c4) (upper)] (cnvt) {};

\draw ($(c1)+(0,-4.1cm)$) node (tc1) [component] {$\mathsf{Line}_{2,q^{d_s}}$};
\draw ($(c2)+(0,-4.1cm)$) node (tc2) [component] {$\mathsf{Line}_{2,q^{d_{s-1}}}$};
\draw ($(c4)+(0,-4.1cm)$) node (tc4) [component] {$\mathsf{Line}_{2,q^{d_1}}$};
\draw ($(tc1)+(1.3cm,0)$) node (tp2) [point] {};
\draw ($(tc2)+(1.3cm,0)$) node (tp3) [point] {};
\draw ($(p4)+(0,-4.1cm)$) node (tp4) [txt] {$\cdots$};
\draw ($(tp4)+(0,2.05cm)$) node (tp4a) [txt] {$\cdots$};
\draw ($(tp4)+(1cm,0)$) node (tp7) [point] {};

\draw [post] (c1.south) -- node [left, midway] {$X'_s$} (tc1.north);
\draw [post] (c2.south) -- node [left, midway] {$X'_{s-1}$} (tc2.north);
\draw [post] (c4.south) -- node [left, midway] {$X'_1$} (tc4.north);

\draw [pre] (tc1.west) -- +(-2.6cm,0) node [anchor=east] {$Y'=Y'_s$};
\draw [post] (tc4.south) -- +(0,-2.5cm) node [anchor=north] {$Z$};

\draw [post] (tc1.south) -- +(0,-0.3cm) -| node [below, near start] {$Y'_{s-1}$} (tp2)  -- (tc2.west);
\draw [post] (tc2.south) -- +(0,-0.3cm) -| node [below, near start] {$Y'_{s-2}$} (tp3)  -- (tp4.west);
\draw [post] ($(tp4.300) + (0,-0.3cm)$) -| node [below, near start] {$Y'_1$} (tp7) -- (tc4.west);

\node [component, inner sep=0.9cm, label={[xshift=1.6cm ,label distance=-0.8cm] 90:$\mathsf{BlkExt}\left(\mathsf{Line}_{2,q^{d_1}},\dots,\mathsf{Line}_{2,q^{d_s}}\right)$}, fit=(tc1) (tc2) (tc4)] (ext) {};

\node [component, inner sep=0.8cm, label={[label distance=-0.9cm] 60:$\mathsf{OuterSamp}_{m,\delta,q}$}, fit=(cnvt) (ext)] {};
\end{tikzpicture} 
  \caption{The outer sampler $\mathsf{OuterSamp}_{m,\delta,q}$.}
  \label{fig_outer}
\end{figure}

\newcommand{\thmoutersamp}{
For any $\epsilon,\delta>0$, integer $m\geq 1$, and sufficiently large prime power $q\geq(n/\epsilon)^{O(1)}$,
$\mathsf{OuterSamp}_{m,\delta,q}$ is an $(\epsilon,\delta)$ sampler of degree $t$ where $d=O(\log m)$, $n=O\lb m+\log_q(1/\delta)\rb$ and $t=O\lb m^2+m\log_q(1/\delta)\rb$.
}
\mytheorem{thmoutersamp}{\thmoutersamp}

\thmref{thmoutersamp} is proven by combining \lemref{lemdegree}, \thmref{equiv}, \lemref{lemblockextraction}, \thmref{thmconverter}, and \lemref{lem_line_extractor}. We defer the proof to the appendix.

\begin{rem}
We assume $m$ is a power of $2$ above. For general $m$, simply pick $m'=2^{\lceil\log m\rceil}$ and let $\mathsf{OuterSamp}_{m,\delta,q}$ be the composition of $\mathsf{OuterSamp}_{m',\delta,q}$ with the projection $\pi:\mathbb{F}_q^{m'}\to\mathbb{F}_q^m$ onto the first $m$ coordinates. It yields an $(\epsilon,\delta)$ sampler of degree $t$ for $\mathbb{F}_q^m$ since $\pi$ is linear, and approximating the density of a subset $A$ in $\mathbb{F}_q^m$ is equivalent to approximating the density of $\pi^{-1}(A)$ in $\mathbb{F}_q^{m'}$.
\end{rem}

\section{Inner sampler}

The sampler $\mathsf{OuterSamp}_{m,\delta,q}$ has randomness complexity $O\lb m\log q+\log (1/\delta)\rb$ which is optimal up to a constant factor. Yet the sample complexity is large, being $q^{O(\log m)}$. We remedy this problem by composing it with an ``inner sampler'' with small sample complexity. Its construction is based on two techniques called error reduction and iterated sampling.

\subsection{Error reduction}

Condensers are at the core of many extractor constructions \cite{RSW06,TUZ07,GUV09,TU12}. In the language of samplers, the use of condensers can be regarded as an error reduction technique, as we shall see below.

Given a function $f:\mathbb{F}_q^n\times\mathbb{F}_q^d\to\mathbb{F}_q^m$, define $\mathrm{LIST}_f(T,\epsilon)\defeq\left\{x\in\mathbb{F}_q^n: \Pr_y[f(x,y)\in T]>\epsilon\right\}$ for any $T\subseteq\mathbb{F}_q^m$ and $\epsilon>0$.
We are interesting in functions $f$ exhibiting a ``list-recoverability'' property that the size of $\mathrm{LIST}_f(T,\epsilon)$ is kept small when $T$ is not too large.
\begin{defn}
A function $f:\mathbb{F}_q^n\times\mathbb{F}_q\to\mathbb{F}_q^m$ is {\it $(\epsilon, L, H)$ list-recoverable} if $|\mathrm{LIST}_f(T,\epsilon)|\leq H$ for all $T\subseteq\mathbb{F}_q^m$ of size at most $L$.
\end{defn}

We then define an operation $\star$ as follows.

\begin{defn}
For functions $f:\mathbb{F}_q^n\times\mathbb{F}_q^d\to\mathbb{F}_q^m$ and $S:\mathbb{F}_q^m\times\mathbb{F}_q^{d'}\to\mathbb{F}_q^{m'}$, define $S\star f:\mathbb{F}_q^n\times(\mathbb{F}_q^d\times\mathbb{F}_q^{d'})\to\mathbb{F}_q^{m'}$ such that
$(S\star f)(x,(y,y'))\defeq S(f(x,y),y')$.
\end{defn}

See \figref{fig_reduction} for an illustration.
The following lemma states that a sampler with mildly small confidence error, when composed with a list-recoverable function via the $\star$ operation, gives a sampler with very small confidence error.

\begin{lem}\label{lem_error_reduction}
Suppose $f:\mathbb{F}_q^n\times\mathbb{F}_q^d\to\mathbb{F}_q^m$ is $(\epsilon_1,L,H)$ list-recoverable, and
$S:\mathbb{F}_q^m\times\mathbb{F}_q^{d'}\to\mathbb{F}_q^{m'}$ is an $(\epsilon_2,L/q^m)$ sampler.
Then $S\star f$ is an $(\epsilon_1+\epsilon_2,H/q^n)$ sampler.
\end{lem}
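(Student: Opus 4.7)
The plan is to fix an arbitrary target set $A\subseteq\mathbb{F}_q^{m'}$, identify the ``bad'' seeds of $S$ relative to $A$, and then show that the $x$'s for which $S\star f$ fails on $A$ are exactly the inputs that funnel too much of their $y$-mass into this bad set; list-recoverability of $f$ will then bound the number of such $x$'s.

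Concretely, I would start by letting
$$B\defeq\{z\in\mathbb{F}_q^m : |\mu_{S(z)}(A)-\mu(A)|>\epsilon_2\}$$
be the collection of inner seeds on which $S$ fails for $A$. Since $S$ is an $(\epsilon_2,L/q^m)$ sampler, $|B|/q^m\le L/q^m$, hence $|B|\le L$. Next, I would unfold the density of $A$ in the sample $(S\star f)(x)$ one step at a time:
$$\mu_{(S\star f)(x)}(A)=\Pr_{y,y'}\bigl[S(f(x,y),y')\in A\bigr]=\E_y\bigl[\mu_{S(f(x,y))}(A)\bigr].$$
Split the outer expectation according to whether $f(x,y)\in B$; writing $p(x)\defeq\Pr_y[f(x,y)\in B]$ and using $|\mu_{S(z)}(A)-\mu(A)|\le\epsilon_2$ whenever $z\notin B$ (and the trivial bound $1$ otherwise), I get
$$\bigl|\mu_{(S\star f)(x)}(A)-\mu(A)\bigr|\le p(x)+\epsilon_2.$$

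Therefore, if $x$ is a ``bad'' input for $S\star f$, meaning $|\mu_{(S\star f)(x)}(A)-\mu(A)|>\epsilon_1+\epsilon_2$, then $p(x)>\epsilon_1$, i.e.\ $x\in\mathrm{LIST}_f(B,\epsilon_1)$. Since $|B|\le L$, the $(\epsilon_1,L,H)$ list-recoverability of $f$ yields $|\mathrm{LIST}_f(B,\epsilon_1)|\le H$, so the bad $x$'s comprise at most an $H/q^n$ fraction of $\mathbb{F}_q^n$. As $A$ was arbitrary, this gives the claimed $(\epsilon_1+\epsilon_2,H/q^n)$ sampler property of $S\star f$.

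The argument is essentially a ``conditioning on bad events'' computation, so I expect no real obstacle; the only subtlety is making sure the definition of $\mathrm{LIST}_f$ is applied with the right seed space (the stated definition uses seed space $\mathbb{F}_q$, but here $f$ has seed space $\mathbb{F}_q^d$; the same definition of $\mathrm{LIST}_f(T,\epsilon)$ extends verbatim with $\Pr_y$ taken over the larger seed space), and that the threshold for $B$ is chosen as $L/q^m$ so that $|B|\le L$ matches the list-recoverability hypothesis exactly.
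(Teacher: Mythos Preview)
Your proposal is correct and matches the paper's own proof essentially line for line: define the bad set $B$ of $S$-seeds, bound $|B|\le L$ via the sampler property, expand $\mu_{(S\star f)(x)}(A)$ as $\E_y[\mu_{S(f(x,y))}(A)]$, split on $f(x,y)\in B$, and conclude that any bad $x$ lies in $\mathrm{LIST}_f(B,\epsilon_1)$, which has size at most $H$. Your remark about the seed space in the definition of $\mathrm{LIST}_f$ is also apt; the paper states list-recoverability only for seed space $\mathbb{F}_q$ but uses it here with $\mathbb{F}_q^d$, and the obvious extension you describe is exactly what is intended.
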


\begin{proof}
Let $A$ be an arbitrary subset of $\mathbb{F}_q^{m'}$.
Let $B=\{y\in\mathbb{F}_q^m: |\mu_{S(y)}-\mu(A)|>\epsilon_2\}$. By the sampling property of $S$, we have $|B|\leq (L/q^m)\cdot q^m=L$ and hence $|\mathrm{LIST}_f(B,\epsilon_1)|\leq H$.
Therefore it suffices to show that for any $x\in\mathbb{F}_q^n\setminus \mathrm{LIST}_f(B,\epsilon_1)$, it holds that $|\mu_{(S\star f)(x)}(A)-\mu(A)|\leq\epsilon_1+\epsilon_2$.

Fix $x\in\mathbb{F}_q^n\setminus \mathrm{LIST}_f(B,\epsilon_1)$. We have
$$
\mu_{(S\star f)(x)}(A)=\Pr_{y,y'}[(S\star f)(x,(y,y'))\in A]=\Pr_{y,y'}[S(f(x,y),y')\in A]
=\E_y\left[\mu_{S(f(x,y))}(A)\right].
$$

Therefore
$$
\begin{aligned}
|\mu_{(S\star f)(x)}(A)-\mu(A)|
=\left|\E_y\left[\mu_{S(f(x,y))}(A)\right]-\mu(A)\right|
\leq \E_y |\mu_{S(f(x,y))}(A)-\mu(A)|\\
\leq \Pr_y[f(x,y)\in B] + \epsilon_2\Pr_y[f(x,y)\not\in B]
\leq \epsilon_1+\epsilon_2.
\end{aligned}
$$
To see the last two steps, note that $|\mu_{S(y)}(A)-\mu(A)|\leq\epsilon_2$ for $y\not\in B$ by definition, and $\Pr_y[f(x,y)\in B]\leq \epsilon_1$ since $x\not\in\mathrm{LIST}_f(B,\epsilon_1)$.
\end{proof}

The condenser $\mathsf{RSCon}_{n,m,q}$ (see \defnref{defn_reed_solomon}) enjoys the following list-recoverability property:

\begin{thm}[\cite{GUV09}]\label{thm_reed_solomon_list}
$\mathsf{RSCon}_{n,m,q}$ is $\lb\epsilon, q^{0.99 m}, q^m\rb$ list-recoverable for sufficiently large $q\geq (n/\epsilon)^{O(1)}$.
\end{thm}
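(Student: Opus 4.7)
The overall strategy is the Parvaresh–Vardy (PV) list-decoding template, executed in the usual three phases: interpolation, a degree-based collapse to a polynomial identity, and an algebraic count of solutions. Throughout I identify each $x \in \mathbb{F}_q^n$ with the polynomial $f_x(Y) = \sum_{i=0}^{n-1} x_i Y^i$ of degree less than $n$, so that $\mathsf{RSCon}_{n,m,q}(x, y) = (y, f_x(y), f_x(\zeta y), \dots, f_x(\zeta^{m-2} y))$, and I think of $T \subseteq \mathbb{F}_q^m$ as a set of such $m$-tuples with $|T| \leq q^{0.99 m}$.

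First (interpolation), I would find a nonzero multivariate polynomial $Q(Y, Z_0, \dots, Z_{m-2}) \in \mathbb{F}_q[Y, Z_0, \dots, Z_{m-2}]$ that vanishes on every tuple in $T$ and whose $(1, n, n, \dots, n)$-weighted degree is at most some threshold $D$. A standard dimension-counting argument produces such a $Q$ as soon as the number of monomials of weighted degree $\leq D$ exceeds $|T| \leq q^{0.99 m}$. The hypothesis $q \geq (n/\epsilon)^{O(1)}$ gives enough slack to pick the individual $Z_i$-degrees of $Q$ small enough that (i) the monomial count comfortably beats $q^{0.99 m}$ and (ii) after the substitution $Z_i \mapsto f_x(\zeta^i Y)$, which has $Y$-degree less than $n$, the resulting univariate polynomial has $Y$-degree strictly less than $\epsilon q$.

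Second (collapse to an identity), for every $x \in \mathrm{LIST}_f(T, \epsilon)$ I would form
$$R_x(Y) \;=\; Q\bigl(Y,\; f_x(Y),\; f_x(\zeta Y),\; \dots,\; f_x(\zeta^{m-2} Y)\bigr).$$
By Phase 1 calibration, $\deg R_x < \epsilon q$; yet by the definition of $\mathrm{LIST}_f(T, \epsilon)$ there are strictly more than $\epsilon q$ values of $y$ at which $R_x$ vanishes. Hence $R_x \equiv 0$, which means every list element $f_x$ satisfies the same algebraic identity $Q(Y, f(Y), f(\zeta Y), \dots, f(\zeta^{m-2} Y)) \equiv 0$ in $\mathbb{F}_q[Y]$.

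Third (counting solutions) is the crux and the step I expect to be the main obstacle. I must show that the number of polynomials $f \in \mathbb{F}_q[Y]_{<n}$ satisfying this single identity is at most $q^m$, which is much smaller than the a priori $q^n$ polynomials available. This is where the role of $\zeta$ as a generator of $\mathbb{F}_q^\times$ enters, playing the part of the Frobenius-like shift in the original PV / Guruswami–Rudra analysis. The intended maneuver is to reduce the identity modulo a fixed irreducible $E(Y)$ of degree $n$ and work in $\mathbb{F}_{q^n} \cong \mathbb{F}_q[Y]/(E(Y))$: each reduced shift $f(\zeta^i Y) \bmod E(Y)$ is an explicit $\mathbb{F}_q$-linear image of the coefficient vector $x$, so the identity becomes a single polynomial equation over $\mathbb{F}_{q^n}$ whose number of solutions can be bounded by a Schwartz–Zippel/Bezout-style argument, provided the $Z_i$-degrees of $Q$ were kept sufficiently small in Phase 1. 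The delicate balancing act is to pick the $Z_i$-degrees small enough that the solution count in Phase 3 is $\leq q^m$, yet large enough that the interpolation in Phase 1 is feasible against $|T| = q^{0.99 m}$; that tradeoff is what produces the specific constant $0.99$ in the statement, and getting both inequalities to close simultaneously is the only genuinely subtle point in the proof.
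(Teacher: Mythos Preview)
The paper does not prove this theorem; it is quoted from \cite{GUV09} as a black box. Your three-phase outline (interpolation, collapse to an identity, algebraic root count) is indeed the skeleton of the argument in \cite{GUV09}, so at that level you are aligned with the intended proof.

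There is, however, a real gap in your Phase~3. You reduce modulo an \emph{arbitrary} irreducible $E(Y)$ of degree $n$ and invoke only that each $f(\zeta^i Y)\bmod E(Y)$ is an $\mathbb{F}_q$-linear function of the coefficient vector. That observation is correct but far too weak to close the argument: an $\mathbb{F}_q$-linear endomorphism of $\mathbb{F}_{q^n}$, written as a polynomial over $\mathbb{F}_{q^n}$, is a linearized polynomial $\sum_{j<n} a_j z^{q^j}$ of degree as large as $q^{n-1}$. Substituting $m-1$ such maps into $Q$ produces a univariate equation whose degree is on the order of $q^{n}$, so a Schwartz--Zippel/Bezout bound on the number of roots is no better than the trivial $q^n$ and nowhere near the required $q^m$.

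What actually drives the root count in \cite{GUV09} (following Guruswami--Rudra) is a specific Frobenius identity, not generic linearity. One does not take an arbitrary degree-$n$ irreducible; one takes $E(Y)$ so that $Y^q\equiv \zeta Y\pmod{E(Y)}$ (e.g.\ $E(Y)\mid Y^{q-1}-\zeta$, whose irreducibility is exactly where ``$\zeta$ generates $\mathbb{F}_q^\times$'' is used). Then $f(\zeta Y)\equiv f(Y^q)=f(Y)^q\pmod{E(Y)}$, so modulo $E$ the tuple $\bigl(f,f(\zeta Y),\dots,f(\zeta^{m-2}Y)\bigr)$ becomes $\bigl(g,g^q,\dots,g^{q^{m-2}}\bigr)$ with $g=f\bmod E$. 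The collapsed equation is now univariate in $g$ of degree only about $q^{m-1}$ times the individual $Z_i$-degree of $Q$, and the Phase~1/Phase~3 trade-off closes at $q^m$. Your sketch gestures at ``the Frobenius-like shift'' but then describes a different, insufficient mechanism; you need this exact power relation, together with the corresponding non-arbitrary choice of $E$ (of degree $q-1$, not $n$), to get the bound.
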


\begin{cor}\label{cor_reed_solomon_reduction}
For any $n\geq m\geq 1$, $\epsilon,\epsilon'>0$ and sufficiently large prime power $q=(n/\epsilon)^{O(1)}$, suppose $S:\mathbb{F}_q^m\times\mathbb{F}_q^d\to\mathbb{F}_q^{m'}$ is an $(\epsilon',q^{-0.01 m})$ sampler of degree $t$,
then $S\star \mathsf{RSCon}_{n,m,q}$ is an $(\epsilon+\epsilon', q^{m-n})$ sampler of degree $nt$.
\end{cor}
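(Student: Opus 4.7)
The plan is to invoke \lemref{lem_error_reduction} directly, with $f$ taken to be the Reed--Solomon condenser $\mathsf{RSCon}_{n,m,q}$ and with the list-recoverability parameters supplied by \thmref{thm_reed_solomon_list}. In other words, this corollary is essentially just the specialization of the general error-reduction lemma to the concrete list-recoverable function already in hand.

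First, I would read off the parameters. By \thmref{thm_reed_solomon_list}, for $q$ sufficiently large in terms of $n/\epsilon$, the map $\mathsf{RSCon}_{n,m,q}$ is $(\epsilon, L, H)$ list-recoverable with $L = q^{0.99 m}$ and $H = q^m$. The hypothesis on $S$ is that it is an $(\epsilon', q^{-0.01 m})$ sampler, and since $L/q^m = q^{0.99 m - m} = q^{-0.01 m}$, the confidence error required by \lemref{lem_error_reduction} matches exactly. Applying that lemma then yields that $S \star \mathsf{RSCon}_{n,m,q}$ is an $(\epsilon + \epsilon',\, H/q^n) = (\epsilon+\epsilon',\, q^{m-n})$ sampler, which is the sampling part of the claim.

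For the degree bound, I would appeal to the remark following \thmref{thm-condenser}, which records that $\mathsf{RSCon}_{n,m,q}$ is a polynomial map of degree $n$. Since $(S \star f)(x,(y,y')) = S(f(x,y), y')$ is obtained by substituting the coordinates of $f$ into $S$, each monomial of $S$ of degree at most $t$ becomes, after substitution, a polynomial of degree at most $n t$ in $(x,y,y')$. Hence $\deg(S \star \mathsf{RSCon}_{n,m,q}) \leq nt$, completing the argument.

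There is no real obstacle here: everything is a bookkeeping check that the list-recoverability size $L$ matches the sampler's confidence error $L/q^m$, plus a one-line degree composition. The only thing to be slightly careful about is ensuring the field size hypothesis $q = (n/\epsilon)^{O(1)}$ is large enough for \thmref{thm_reed_solomon_list} to apply, which is precisely what the corollary's hypothesis provides.
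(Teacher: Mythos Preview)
Your proposal is correct and follows exactly the paper's own proof: apply \lemref{lem_error_reduction} with the list-recoverability parameters from \thmref{thm_reed_solomon_list}, check that $L/q^m=q^{-0.01m}$ matches the assumed confidence error of $S$, and then observe that composing a degree-$n$ map into a degree-$t$ map yields degree $nt$. The paper's version is simply more terse.
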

\begin{proof}
Apply \lemref{lem_error_reduction} and \thmref{thm_reed_solomon_list}. Note that $\mathsf{RSCon}_{n,m,q}$ has degree $n$. Therefore $\lb S\star \mathsf{RSCon}_{n,m,q}\rb (X,(Y,Y'))=S(\mathsf{RSCon}_{n,m,q}(X,Y),Y')$ has degree $nt$ in its variables $X,Y,Y'$.
\end{proof}

\subsection{Iterated sampling}


We introduce the operation $\circ$ denoting the composition of two samplers.

\begin{defn} (composed sampler). Given functions $S_1 : \mathbb{F}_q^{n_1}\times\mathbb{F}_q^{d_1}\to \mathbb{F}_q^{d_0}$ and $S_2 : \mathbb{F}_q^{n_2} \times \mathbb{F}_q^{d_2}\to \mathbb{F}_q^{d_1}$, define $S_1\circ S_2:(\mathbb{F}_q^{n_1}\times \mathbb{F}_q^{n_2})\times \mathbb{F}_q^{d_2}\to\mathbb{F}_q^{d_0}$ such that
$(S_1\circ S_2)((x_1,x_2),y)\defeq S_1(x_1,S_2(x_2,y))$.
\end{defn}

See \figref{fig_iterated} for an illustration. The composed sampler $S_1\circ S_2$ first uses its randomness $x_1$ to get the sample $S_1(x_1)=\{S_1(x_1,y):y\in\mathbb{F}_q\}$, and then uses its randomness $x_2$ to get the subsample $\{S_1(x_1,S_2(x_2,y)):y\in\mathbb{F}_q\}\subseteq S_1(x_1)$.
Intuitively, if $S_1$ and $S_2$ are good samplers then so is $S_1\circ S_2$. This is indeed shown by \cite{BR94, TU06} and we formalize it as follows:

\newcommand{\lemresampling}{
Let $S_1 : \mathbb{F}_q^{n_1}\times\mathbb{F}_q^{d_1}\to \mathbb{F}_q^{d_0}$ be an $(\epsilon_1,\delta_1)$ sampler of degree $t_1$.
And let $S_2 : \mathbb{F}_q^{n_2} \times \mathbb{F}_q^{d_2}\to \mathbb{F}_q^{d_1}$ be an $(\epsilon_2,\delta_2)$ sampler of degree $t_2$.
Then $S_1 \circ S_2 :(\mathbb{F}_q^{n_1}\times \mathbb{F}_q^{n_2})\times \mathbb{F}_q^{d_2}\to\mathbb{F}_q^{d_0}$ is an $(\epsilon_1+\epsilon_2, \delta_1+\delta_2)$ sampler of degree $t_1 t_2$.
}
\mylem{lemresampling}{\lemresampling}{\cite{BR94, TU06}}

See the appendix for its proof.

\begin{figure}[hbt]
  \centering
  \begin{minipage}[b]{.48\textwidth}
  \centering
  \begin{tikzpicture}
\node (c1) [component] {$f$}; 
\node (c2) [below=of c1, component] {$S$};

\draw [pre] (c1.north) -- +(0,1cm) node [left, very near end] {$X$};
\draw [pre] (c1.west) -- +(-1cm,0) node [above, very near end] {$Y$};
\draw [pre] (c2.west) -- +(-1cm,0) node [above, very near end] {$Y'$};
\draw [post] (c2.south) -- +(0,-1cm) node [left, very near end] {$Z$};

\draw [post] (c1) to (c2);
\node [component, inner sep=0.3cm, fit=(c1) (c2)] {};
\end{tikzpicture}
  \caption{The operation $S\star f$.}
  \label{fig_reduction}
  \end{minipage}%
  \begin{minipage}[b]{.48\textwidth}
  \centering
  \begin{tikzpicture}
\node (c1) [component] {$S_2$}; 
\node (p2) [right=of c1, point] {};
\node (c2) [right=of p2, component] {$S_1$}; 
\draw [pre] (c1.west) -- node [above, very near end] {$Y$} +(-1cm,0);
\draw [pre] (c1.north) -- node [left, very near end] {$X_2$} +(0,1cm);
\draw [pre] (c2.north) -- node [left, very near end] {$X_1$} +(0,1cm);
\draw [post] (c2.south) -- node [left, very near end] {$Z$} +(0,-1cm);
\draw [post] (c1.south) -- +(0,-0.2cm) -| node [right, midway] {$Y'$} (p2)  -- (c2.west);
\node [component, inner sep=0.5cm, fit=(c1) (c2)] {};
\end{tikzpicture}
  \caption{The operation $S_1\circ S_2$.}
  \label{fig_iterated}
  \end{minipage}
\end{figure}

\subsection{Construction of the inner sampler}

We use the basic curve samplers as the building blocks and apply the error reduction as well as iterated sampling repeatedly to obtain the inner sampler. The formal construction is as follows.

\begin{defn}[inner sampler]\label{defn_inner_samp}
For $m\geq 1$, $\delta>0$ and prime power $q$, pick $s=\lceil\log m\rceil$ and let $d_i=2^{s-i}$ for $0\leq i\leq s$. Let $n_i=16^i$ for $0\leq i\leq s-1$ and $n_s=16^s+ 20\left\lceil\log_q(1/\delta)\right\rceil$.
Define $S_i:\mathbb{F}_q^{n_i d_i}\times\mathbb{F}_q^{d_i}\to\mathbb{F}_q^m$ for $0\leq i\leq s$ as follows:
\begin{itemize}
\item $S_0:\mathbb{F}_q\times\mathbb{F}_q^{d_0}\to\mathbb{F}_q^m$ projects $(x,y)$ onto the first $m$ coordinates of $y$.
\item $S_i\defeq\lb S_{i-1}\star \mathsf{RSCon}_{\frac{n_i}{4},2n_{i-1},q^{d_i}}\rb \circ \mathsf{Curve}_{3,\frac{n_i}{4},q^{d_i}}$ for $i=1,\dots,s$.
\end{itemize}
Finally, let $\mathsf{InnerSamp}_{m,\delta,q}\defeq S_s$.
\end{defn}

See \figref{fig_inner} for an illustration of the above construction.

\begin{figure}[hbt]
  \centering
  \begin{tikzpicture}

\node (c2) [component] {$S_{i-1}$}; 
\begin{scope}[node distance=1.2cm]
\node (c4) [above=of c2, component] {$\mathsf{RSCon}_{\frac{n_i}{4},2n_{i-1},q^{d_i}}$};
\node (c1) [left=of c4, component] {$\mathsf{Curve}_{3,\frac{n_i}{4},q^{d_i}}$}; 
\end{scope}
\node (p4) [left=of c4, point] {};
\begin{scope}[node distance=1.5cm]
\node (l1) [left=of c1, point] {};
\end{scope}

\draw [pre] (c1.west) -- (l1) node [above, very near end] {$Y_i$};
\draw [pre] (c1.north) -- +(0,1.2cm) node [left, very near end] {$X_{i,1}$};
\draw [pre] (c4.north) -- +(0,1.2cm) node [left, very near end] {$X_{i,2}$};
\draw [post] (c4) to node [right, midway] {$X_{i-1}$} (c2);
\draw [post] (c2.south) -- node [left, very near end] {$Z$} +(0,-1.5cm);
\draw [post] (c1.240) |- node [above, very near end] {$Y_{i-1}$} (c2.west);
\draw [post] (c1.300) -- +(0,-0.3cm) -| (p4)  -- (c4.west);

\node [component, inner sep=0.7cm, label={[label distance=-0.8cm] 270:$S_i$}, fit=(c1) (c2)  (c4)] {};
\end{tikzpicture}
  \caption{The recursive construction of $S_i=\lb S_{i-1}\star \mathsf{RSCon}_{\frac{n_i}{4},2n_{i-1},q^{d_i}}\rb \circ \mathsf{Curve}_{3,\frac{n_i}{4},q^{d_i}}$ used to define the inner sampler. Here $X_i=(X_{i,1},X_{i,2})$ (resp. $X_{i-1}$) and $Y_i$ (resp. $Y_{i-1}$) are the two arguments of $S_i$ (resp. $S_{i-1}$). And $Z$ is the common output of $S_i$ and $S_{i-1}$.}
  \label{fig_inner}
\end{figure}
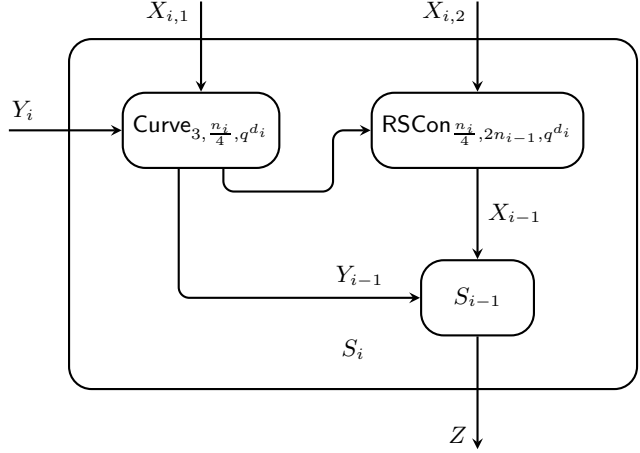

\begin{thm}\label{thm_inner_samp}
For any $\epsilon,\delta>0$, integer $m\geq 1$ and large enough prime power $q\geq\lb\frac{m \log(1/\delta)}{\epsilon}\rb^{O(1)}$, $\mathsf{InnerSamp}_{m,\delta,q}:\mathbb{F}_q^n\times\mathbb{F}_q\to\mathbb{F}_q^m$ is an $(\epsilon, \delta)$ sampler of degree $t$ where $n=O\lb m^{O(1)}+\log_q(1/\delta)\rb$ and $t=O\lb m^{O(\log m)}\log_q^2 (1/\delta)\rb$.
\end{thm}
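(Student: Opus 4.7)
The plan is a clean induction on $i$: I show that each intermediate $S_i:\mathbb{F}_q^{n_i d_i}\times\mathbb{F}_q^{d_i}\to\mathbb{F}_q^m$ is an $(\epsilon_i,\delta_i)$ sampler of degree $t_i$ for explicit parameters, then specialize at $i=s$. The base case $i=0$ is essentially trivial: since $d_0=2^s\geq m$, the projection $S_0$ enumerates all of $\mathbb{F}_q^m$ exactly once as $y$ ranges over $\mathbb{F}_q^{d_0}$, so $S_0$ is a $(0,0)$ sampler of degree $1$.

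For the inductive step, first view $S_{i-1}$ over the extension field $\mathbb{F}_{q^{d_i}}$ (using $d_{i-1}=2d_i$) as a map $\mathbb{F}_{q^{d_i}}^{2n_{i-1}}\times\mathbb{F}_{q^{d_i}}^{2}\to\mathbb{F}_q^m$; by \lemref{lemdegree} this does not raise the degree. Apply \corref{cor_reed_solomon_reduction} over the field $q^{d_i}$ with parameters $n=n_i/4$ and $m=2n_{i-1}$; since $n_i=16 n_{i-1}$, the output $S_{i-1}\star\mathsf{RSCon}_{n_i/4,2n_{i-1},q^{d_i}}$ becomes a sampler with confidence error $q^{-2n_{i-1}d_i}$, accuracy error $\epsilon_{i-1}+\epsilon_{\mathrm{cond},i}$, and degree $(n_i/4)\cdot t_{i-1}$. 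Then apply \lemref{lemcurvesampler} to see that $\mathsf{Curve}_{3,n_i/4,q^{d_i}}$ is an $(\epsilon_{\mathrm{curve},i},q^{-n_id_i/16})$ sampler of degree $n_i/4$, and finally apply \lemref{lemresampling} to combine via $\circ$. This gives
\[
\delta_i \;\leq\; q^{-2n_{i-1}d_i}+q^{-n_id_i/16}\;\leq\; 2q^{-n_{i-1}d_i},\qquad t_i=(n_i/4)^2 t_{i-1},
\]
and accuracy error $\epsilon_i=\epsilon_{i-1}+\epsilon_{\mathrm{cond},i}+\epsilon_{\mathrm{curve},i}$.

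Parameter bookkeeping is the routine but delicate part. The usability of \corref{cor_reed_solomon_reduction} at stage $i$ requires the invariant $\delta_{i-1}\leq q^{-0.02\,n_{i-1}d_i}$; this is easily verified from $\delta_{i-1}\leq 2q^{-n_{i-2}d_{i-1}}$ since $n_{i-2}d_{i-1}=2\cdot 16^{i-2}d_i$ far exceeds $0.02\, n_{i-1}d_i=0.32\cdot 16^{i-2}d_i$. I split the total accuracy $\epsilon$ into $2s{+}1$ equal pieces, one per application of $\star$ and $\circ$; the hypothesis $q\geq(m\log(1/\delta)/\epsilon)^{O(1)}$ covers the field-size requirements of \thmref{thm-condenser} and \lemref{lemcurvesampler} uniformly in $i$ (since $n_i\leq n_s=16^s+20\lceil\log_q(1/\delta)\rceil$ and $s=O(\log m)$).

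At the top of the recursion ($i=s$, $d_s=1$) the extra $20\lceil\log_q(1/\delta)\rceil$ summand in $n_s$ is calibrated so that $q^{-n_s/16}\leq q^{-16^{s-1}}\cdot\delta^{1.25}\leq\delta$, yielding $\delta_s\leq\delta$. The randomness is $n_s=16^s+20\lceil\log_q(1/\delta)\rceil=O(m^{O(1)}+\log_q(1/\delta))$ because $s=\lceil\log m\rceil$ gives $16^s=O(m^4)$. Unrolling $t_i=(n_i/4)^2 t_{i-1}$ from $t_0=1$ yields $t_s=(n_s/4)^2\prod_{i=1}^{s-1}(4\cdot 16^{i-1})^2$, which simplifies to $O\bigl(m^{O(\log m)}\log_q^2(1/\delta)\bigr)$ as claimed. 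The main obstacle I anticipate is exactly this multi-parameter juggling: making sure the confidence error invariant needed for each error-reduction step is preserved across the alternation of $\star$ and $\circ$ over the telescoping extension fields $\mathbb{F}_{q^{d_i}}$, while simultaneously keeping accuracy, randomness, and degree within their target bounds.
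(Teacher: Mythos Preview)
Your proposal is correct and follows the paper's proof essentially verbatim: induct on $i$, at each stage apply \corref{cor_reed_solomon_reduction} (error reduction over $\mathbb{F}_{q^{d_i}}$) followed by \lemref{lemresampling} with $\mathsf{Curve}_{3,n_i/4,q^{d_i}}$ (iterated sampling), and read off the parameters at $i=s$. The only cosmetic difference is that the paper tracks the cleaner closed-form invariant $\delta_i=q^{-n_i d_i/20}$ rather than your recursive bound $\delta_i\le 2q^{-n_{i-1}d_i}$; note also that \lemref{lemdegree} should be applied to $\mathsf{RSCon}$ and $\mathsf{Curve}$ (passing from $\mathbb{F}_{q^{d_i}}$ down to $\mathbb{F}_q$), not to $S_{i-1}$, but your degree recursion $t_i=(n_i/4)^2 t_{i-1}$ is unaffected.
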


\begin{proof}
Let $\epsilon'=\frac{\epsilon}{2s}$ and $d_i,n_i,S_i$ be as in \defnref{defn_inner_samp} for $0\leq i\leq s$.
We will prove that each $S_i$ is an $(\epsilon_i,\delta_i)$ sampler of degree $t_i$ where
$\epsilon_i=2i \epsilon'$,
$\delta_i=q^{-n_id_i/20}$,
and $t_i=\prod_{j=1}^i \lb\frac{n_j}{4}\rb^2$.
The theorem follows by noting that $\mathsf{InnerSamp}_{m,\delta,q}=S_s$, $\epsilon=\epsilon_s$, $\delta\geq\delta_s$, and $t=t_s$.

Induct on $i$. The case $i=0$ is trivial.  Consider the case $i>0$ and assume the claim holds for all $i'<i$.
By the induction hypothesis, $S_{i-1}$ is an $(\epsilon_{i-1},\delta_{i-1})$ sampler of degree $t_{i-1}$.

Note that $\delta_{i-1}=q^{-n_{i-1}d_{i-1}/20}\leq q^{-0.01n_{i-1}d_{i-1}}$.
By \corref{cor_reed_solomon_reduction},
$S_{i-1}\star \mathsf{RSCon}_{\frac{n_i}{4},2n_{i-1},q^{d_i}}$
is an $\lb \epsilon_{i-1}+\epsilon', q^{n_{i-1}d_{i-1}-(n_i/4) d_i} \rb$ sampler of degree $\frac{n_i}{4}\cdot t_{i-1}$.
By \lemref{lemcurvesampler}, $\mathsf{Curve}_{3,\frac{n_i}{4},q^{d_i}}$ is an  $\lb\epsilon',q^{-n_i d_i/16}\rb$ sampler of degree $\frac{n_i}{4}$.
Finally by \lemref{lemresampling}, the function
$$
S_i=\lb S_{i-1}\star \mathsf{RSCon}_{\frac{n_i}{4},2n_{i-1},q^{d_i}}\rb \circ \mathsf{Curve}_{3,\frac{n_i}{4},q^{d_i}}
$$
is an $\lb\epsilon_{i-1}+2\epsilon', q^{n_{i-1}d_{i-1}-(n_i/4) d_i}+q^{-n_i d_i/16}\rb$ sampler of degree $\lb\frac{n_i}{4}\rb^2\cdot t_{i-1}$. It remains to check that
$$
\epsilon_i=\epsilon_{i-1}+2\epsilon',\quad \delta_i\geq q^{n_{i-1}d_{i-1}-(n_i/4) d_i}+q^{-n_i d_i/16}\quad \mathrm{and}\quad  t_i=\lb\frac{n_i}{4}\rb^2\cdot t_{i-1}.
$$
which hold by the choices of parameters.
\end{proof}

\section{Putting it together}

We compose the outer sampler and the inner sampler to get the desired curve sampler.

\begin{defn}\label{defn_sampler}
For $m\geq 1$, $\delta>0$ and prime power $q$, define
$$
\mathsf{Samp}_{m,\delta,q}
\defeq\mathsf{OuterSamp}_{m,\delta/2,q}\circ\mathsf{InnerSamp}_{d,\delta/2,q}.
$$
\end{defn}

\begin{thm}[\thmref{thm_sampler_1} restated]\label{thm_sampler}
For any $\epsilon,\delta>0$, integer $m\geq 1$ and sufficiently large prime power $q\geq\lb\frac{m \log(1/\delta)}{\epsilon}\rb^{O(1)}$, the function $\mathsf{Samp}_{m,\delta,q}:\mathbb{F}_q^n\times\mathbb{F}_q\to\mathbb{F}_q^m$ is an $(\epsilon,\delta)$ sampler of degree $t$ where $n=O\lb m+\log_q(1/\delta)\rb$ and $t=\lb m\log_q(1/\delta)\rb^{O(1)}$. In particular, $\mathsf{Samp}_{m,\delta,q}$ is an $(\epsilon,\delta)$ degree-$t$ curve sampler.
\end{thm}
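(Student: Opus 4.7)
The plan is to obtain Theorem~\ref{thm_sampler} as a direct corollary of \lemref{lemresampling}, applied to the outer sampler from \thmref{thmoutersamp} and the inner sampler from \thmref{thm_inner_samp} with their parameters instantiated as $\epsilon/2$ and $\delta/2$. Concretely, first I would invoke \thmref{thmoutersamp} with parameters $(\epsilon/2, \delta/2)$ to obtain $\mathsf{OuterSamp}_{m,\delta/2,q}: \mathbb{F}_q^{n_1}\times\mathbb{F}_q^{d_1}\to\mathbb{F}_q^m$, where $d_1 = O(\log m)$, $n_1 = O(m+\log_q(1/\delta))$, and the degree is $t_1=O(m^2+m\log_q(1/\delta))=(m\log_q(1/\delta))^{O(1)}$. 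Then I would invoke \thmref{thm_inner_samp} with parameters $(\epsilon/2, \delta/2)$ and target dimension $d_1$ to obtain $\mathsf{InnerSamp}_{d_1,\delta/2,q}: \mathbb{F}_q^{n_2}\times\mathbb{F}_q\to\mathbb{F}_q^{d_1}$, whose output dimension matches the seed length of the outer sampler exactly so that the composition $\circ$ is well-defined.

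Next I would verify the parameters. For the randomness, since $d_1 = O(\log m)$, \thmref{thm_inner_samp} gives $n_2 = O(d_1^{O(1)} + \log_q(1/\delta)) = O((\log m)^{O(1)} + \log_q(1/\delta))$, which is dominated by $n_1$, so the total randomness $n_1+n_2 = O(m+\log_q(1/\delta))$ as claimed. For the sampling property, \lemref{lemresampling} yields accuracy error $\epsilon/2+\epsilon/2=\epsilon$ and confidence error $\delta/2+\delta/2=\delta$. The sample complexity is $q$ since the second argument of $\mathsf{InnerSamp}$ lives in $\mathbb{F}_q$, and thus $\mathsf{Samp}_{m,\delta,q}(x,\cdot): \mathbb{F}_q\to\mathbb{F}_q^m$ is by definition a curve for every fixed $x$.

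For the degree bound, \lemref{lemresampling} multiplies the degrees, so the overall degree is $t_1\cdot t_2$, where by \thmref{thm_inner_samp} we have $t_2 = O(d_1^{O(\log d_1)}\log_q^2(1/\delta))$. Substituting $d_1 = O(\log m)$, the factor $d_1^{O(\log d_1)} = (\log m)^{O(\log\log m)} = 2^{O((\log\log m)^2)}$ is sub-polynomial in $m$ and is certainly bounded by $(m\log_q(1/\delta))^{O(1)}$; combined with the $\log_q^2(1/\delta)$ term this gives $t_2=(m\log_q(1/\delta))^{O(1)}$, and hence $t = t_1 t_2 = (m\log_q(1/\delta))^{O(1)}$. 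The curve degree is at most the degree of the sampler as a polynomial map, so the bound carries over to give the desired degree-$t$ curve sampler.

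The main ``obstacle'' is really only careful bookkeeping: checking that the dimension of the inner sampler matches the seed dimension of the outer sampler, that the hypothesis $q\geq (m\log(1/\delta)/\epsilon)^{O(1)}$ suffices to meet both of the field-size requirements in \thmref{thmoutersamp} and \thmref{thm_inner_samp} simultaneously, and that the $(\log m)^{O(\log\log m)}$ blow-up from the inner sampler remains dominated by $m^{O(1)}$ so the final degree collapses into the clean form $(m\log_q(1/\delta))^{O(1)}$. No new combinatorial or algebraic input is needed beyond what has already been established in Sections~3 and~4.
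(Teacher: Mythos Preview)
Your proposal is correct and follows essentially the same argument as the paper's proof: invoke \thmref{thmoutersamp} and \thmref{thm_inner_samp} with halved parameters, compose via \lemref{lemresampling}, and verify the resulting randomness and degree bounds using $d_1=O(\log m)$. The paper's write-up is just a terser version of exactly what you outlined, including the final observation that the curve degree is bounded by the degree of the full polynomial map.
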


\begin{proof}
By \thmref{thmoutersamp},
$\mathsf{OuterSamp}_{m,\delta/2,q}:\mathbb{F}_q^{n_1}\times\mathbb{F}_q^d\to\mathbb{F}_q^m$ is an $(\epsilon/2,\delta/2)$ sampler of degree $t_1$ where $d=O(\log m)$, $n_1=O\lb m+\log_q(1/\delta)\rb$ and $t_1=O\lb m^2+m\log_q(1/\delta)\rb$.

By \thmref{thm_inner_samp}, $\mathsf{InnerSamp}_{d,\delta/2,q}:\mathbb{F}_q^{n_2}\times\mathbb{F}_q\to\mathbb{F}_q^d$ is an $(\epsilon/2, \delta/2)$ sampler of degree $t_2$ where $n_2=O\lb\lb\log m\rb^{O(1)}+\log_q(1/\delta)\rb$ and $t_2=O\lb \lb\log m\rb^{O(\log\log m)}\log_q^2 (1/\delta)\rb$.

Finally, \lemref{lemresampling} implies that
$\mathsf{Samp}_{m,\delta,q}$
is an $(\epsilon,\delta)$ sampler of degree $t$ with $n=n_1+n_2=O\lb m+\log_q(1/\delta)\rb$ and $t=t_1t_2=\lb m\log_q(1/\delta)\rb^{O(1)}$.
A fortiori, it is a degree-$t$ curve sampler since
the degree of $\mathsf{Samp}_{m,\delta,q}(x,\cdot)$ is bounded by the degree of $\mathsf{Samp}_{m,\delta,q}$ for all $x\in\mathbb{F}_q^n$.
\end{proof}

\section*{Acknowledgements}

The author is grateful to Chris Umans for his support and many helpful discussions.


\bibliographystyle{alpha}
\bibliography{random}

\appendix

\section{Tail probability bounds}

The following bound follows from Chebyshev's inequality:
\begin{lem}\label{lem_pairwise}
Suppose $X_1,\dots,X_n$ are pairwise independent random variables. Let $X=\sum_{i=1}^n X_i$ and $\mu=\E[X]$, and let $A>0$. Then
$$
\Pr[|X-\mu|\geq A]\leq\frac{\sum_{i=1}^n\mathbf{Var}[X_i]}{A^2}.
$$
\end{lem}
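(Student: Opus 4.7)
The plan is to derive the bound as a direct consequence of Chebyshev's inequality, using pairwise independence to simplify the variance of the sum. First I would recall Chebyshev's inequality in the form $\Pr[|Y - \E[Y]| \geq A] \leq \mathbf{Var}[Y]/A^2$ for any random variable $Y$ with finite variance, applied to $Y = X$.

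The key observation is that for pairwise independent random variables, the covariance of any two distinct variables vanishes: $\E[X_i X_j] = \E[X_i]\E[X_j]$ for $i \neq j$, so $\mathrm{Cov}(X_i, X_j) = 0$. Expanding $\mathbf{Var}[X] = \E\bigl[(X-\mu)^2\bigr] = \sum_{i,j} \mathrm{Cov}(X_i, X_j)$, all off-diagonal terms drop out and we are left with $\mathbf{Var}[X] = \sum_{i=1}^n \mathbf{Var}[X_i]$. Substituting this into Chebyshev's inequality yields exactly the claimed bound.

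There is no real obstacle here; the whole argument is a two-line textbook calculation. The only thing to be careful about is making sure that pairwise (rather than full) independence is genuinely sufficient, which it is because the variance of a sum only involves second moments and hence only pairs of variables. Since the lemma statement already hints that it \emph{follows from Chebyshev's inequality}, the proof in the appendix is almost certainly just this short chain of equalities followed by Chebyshev.
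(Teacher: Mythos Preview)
Your proposal is correct and matches the paper's treatment: the paper does not give a separate proof but simply states that the bound ``follows from Chebyshev's inequality,'' which is exactly the two-line argument you outline (pairwise independence kills the cross terms so $\mathbf{Var}[X]=\sum_i \mathbf{Var}[X_i]$, then apply Chebyshev).
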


We will also use the following tail bound for $t$-wise independent random variables:
\begin{lem}[\cite{BR94}]\label{lem_twise}
Let $t\geq 4$ be an even integer. Suppose $X_1,\dots,X_n$ are $t$-wise independent random variables over $[0,1]$. Let $X=\sum_{i=1}^n X_i$ and $\mu=\E[X]$, and let $A>0$. Then
$$
\Pr[|X-\mu|\geq A]=O\lb\lb\frac{t\mu+t^2}{A^2}\rb^{t/2}\rb.
$$
\end{lem}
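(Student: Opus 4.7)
I will prove this by the standard moment method (Bellare--Rompel), i.e.\ bound the $t$-th central moment of $X$ and then apply Markov.

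\textbf{Step 1 (Markov on the $t$-th moment).} Since $t$ is even, $(X-\mu)^t \geq 0$, so
$$\Pr[|X-\mu|\geq A] \;=\; \Pr\bigl[(X-\mu)^t \geq A^t\bigr] \;\leq\; \frac{\E[(X-\mu)^t]}{A^t}.$$
So it suffices to show $\E[(X-\mu)^t] = O\bigl((t\mu+t^2)^{t/2}\bigr)$.

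\textbf{Step 2 (centering).} Let $Y_i = X_i - \E[X_i]$, so $|Y_i|\leq 1$, $\E[Y_i]=0$, the $Y_i$'s are still $t$-wise independent, and $Y := \sum_i Y_i = X-\mu$. Because $|Y_i|\leq 1$, we have $\E[Y_i^2]\leq \E[X_i]$, hence $\sigma^2 := \sum_i \E[Y_i^2] \leq \mu$. Also for any integer $m\geq 2$, $\E[|Y_i|^m]\leq \E[Y_i^2]$.

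\textbf{Step 3 (expansion and vanishing).} Expand
$$\E[Y^t] \;=\; \sum_{(i_1,\dots,i_t)\in [n]^t}\E\!\left[\prod_{k=1}^t Y_{i_k}\right].$$
By $t$-wise independence, if the multiset $\{i_1,\dots,i_t\}$ contains any index of multiplicity exactly $1$, the corresponding expectation factors and vanishes because $\E[Y_i]=0$. Hence only tuples whose multiset has every distinct index appearing $\geq 2$ times contribute; such tuples have at most $s\leq t/2$ distinct indices.

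\textbf{Step 4 (per-term bound).} For a surviving tuple with distinct indices $j_1,\dots,j_s$ of multiplicities $m_1,\dots,m_s\geq 2$ summing to $t$, using independence and Step 2,
$$\Bigl|\E\!\left[Y_{j_1}^{m_1}\cdots Y_{j_s}^{m_s}\right]\Bigr| \;=\; \prod_{k=1}^s \bigl|\E[Y_{j_k}^{m_k}]\bigr| \;\leq\; \prod_{k=1}^s \E[Y_{j_k}^2].$$

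\textbf{Step 5 (combinatorial bookkeeping).} Group tuples by the set partition $\pi$ of $[t]$ they induce, and let $P(t,s)$ be the number of set partitions of $[t]$ into exactly $s$ blocks each of size $\geq 2$. Summing assignments of distinct indices $j_1,\dots,j_s\in [n]$ to the blocks of $\pi$ and applying Step 4,
$$\E[Y^t] \;\leq\; \sum_{s=1}^{t/2} P(t,s) \Biggl(\sum_{j=1}^n \E[Y_j^2]\Biggr)^{\!\!s} \;\leq\; \sum_{s=1}^{t/2} P(t,s)\,\mu^s.$$
A standard estimate (e.g.\ matching $2s$ positions for ``doubles'' and distributing the remaining $t-2s$ positions among the $s$ blocks) gives $P(t,s)\leq \binom{t}{2s}\cdot \frac{(2s)!}{s!\,2^s}\cdot s^{t-2s}\leq (c\,t)^{t-2s}\,t^{s}/s!$ for an absolute constant $c$. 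Plugging this in and splitting the sum into the regimes $\mu\gtrsim t$ and $\mu\lesssim t$, one obtains term-by-term a bound of the form $O\!\left((t\mu)^{t/2} + (c't^2)^{t/2}\right)$, which in turn is $O\!\left((t\mu+t^2)^{t/2}\right)$.

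\textbf{Main obstacle.} Steps 1--4 are routine; the technical heart is Step 5, namely bounding the partition count $P(t,s)$ and combining it with $\mu^s$ across all $s\leq t/2$ to produce the clean closed form $(t\mu+t^2)^{t/2}$. The key trick is the elementary inequality $a^s b^{t/2-s}\leq (a+b)^{t/2}$ (up to constants hidden in $c, c'$) applied with $a\asymp t\mu$ and $b\asymp t^2$, which merges the two regimes into a single expression. Dividing by $A^t$ then yields the stated $O\!\left(((t\mu+t^2)/A^2)^{t/2}\right)$.
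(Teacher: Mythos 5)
The paper does not prove this lemma; it is stated as known and cited directly to [BR94], so there is no in-paper proof to compare against. Your overall strategy is indeed the standard Bellare--Rompel moment argument, and Steps~1--4 are correct: even-power Markov, centering, vanishing of monomials that contain a singleton index, and bounding each surviving product by $\prod_k \E[Y_{j_k}^2]$ using $|Y_i|\le 1$.

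The gap is in Step~5. Your first inequality
$P(t,s)\le \binom{t}{2s}\,\frac{(2s)!}{s!\,2^s}\,s^{t-2s}$
is valid (each triple ``choose $2s$ doubled positions, match them into $s$ pairs, assign the remaining $t-2s$ positions to blocks'' produces a partition with $s$ blocks all of size $\ge 2$, and every such partition appears at least once), but the claimed simplification to $(c\,t)^{t-2s}\,t^{s}/s!$ is false. Take $s=t/2$: then $P(t,t/2)$ is the number of perfect matchings of $[t]$, namely $\frac{t!}{(t/2)!\,2^{t/2}}=\Theta\bigl((t/e)^{t/2}\bigr)$, whereas your bound reduces to $t^{t/2}/(t/2)!=\Theta\bigl((2e)^{t/2}\bigr)$, which is exponentially smaller for $t\ge 16$. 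Equivalently, the inequality you would need, $\bigl(t/2\bigr)^s\le (ct/s)^{t-2s}$, fails badly once $s$ is near $t/2$, which is exactly the regime that dominates the sum when $\mu$ is large. Since you are trying to \emph{upper}-bound $\sum_s P(t,s)\mu^s$, replacing $P(t,s)$ by something smaller than the truth undermines the logic.

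The fix is to stick with the valid inequality and simplify differently. From $\binom{t}{2s}\frac{(2s)!}{s!2^s}s^{t-2s}=\frac{t!}{(t-2s)!\,s!\,2^s}\,s^{t-2s}$, together with $\frac{t!}{(t-2s)!}\le t^{2s}$ and $s^{t-2s}\le (t/2)^{t-2s}$ for $s\le t/2$, one gets
$$P(t,s)\ \le\ \frac{t^t}{2^{\,t-s}\,s!}\,,\qquad\text{so}\qquad \sum_{s=1}^{t/2}P(t,s)\mu^s\ \le\ \frac{t^t}{2^t}\sum_{s\ge 1}\frac{(2\mu)^s}{s!}\,.$$
When $\mu\lesssim t$ the exponential series is at most $e^{O(t)}$ and the $t^t/2^t$ prefactor gives $O\bigl((t^2)^{t/2}\bigr)$ with an absorbable $O(1)^{t}$ constant; when $\mu\gtrsim t$ the sum is dominated by $s=t/2$, giving $O\bigl((t\mu)^{t/2}\bigr)$ by Stirling, and the two regimes combine into $O\bigl((t\mu+t^2)^{t/2}\bigr)$. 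So your plan goes through once the erroneous intermediate bound on $P(t,s)$ is replaced; as written, Step~5 does not establish the claim.
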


\section{Lower bounds}\label{deglb}

We will use the following optimal lower bound for extractors:
\begin{thm}[\cite{RT00}, restated]\label{thm_ext_lower_bound}
Let $E:\mathbb{F}_q^n\times\mathbb{F}_q^d\to\mathbb{F}_q^m$ be a $(k,\epsilon,q)$ extractor. Then
\begin{enumerate}[(a)]
\item if $\epsilon<1/2$ and $q^d\leq q^m/2$, then $q^d=\Omega\lb\frac{(n-k)\log q}{\epsilon^2}\rb$, and
\item if $q^d\leq q^m/4$, then $q^{d+k-m}=\Omega(1/\epsilon^2)$.
\end{enumerate}
\end{thm}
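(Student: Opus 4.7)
Proof proposal. Both parts follow the standard template for extractor lower bounds: assuming the stated inequality fails, exhibit a min-entropy $k$ source together with a test set $T \subseteq \mathbb{F}_q^m$ that witnesses a violation of the extractor property. The main handle is the sampler reformulation of extractors from \thmref{equiv}, which converts the extractor guarantee into a pointwise bound, for any fixed $T$, on the size of the ``bad'' input set $\{x : |p_x^T - |T|/q^m| > \epsilon\}$, where $p_x^T := \Pr_y[E(x,y) \in T]$; in particular, if this set had size $\geq 2q^k$, the uniform distribution on it would be a min-entropy $k$ source witnessing a failure.

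For part (a), the plan is to randomize the test set. Fix target density $1/2$ (permissible since $q^d \leq q^m/2$ forces $q^m \geq 2$) and draw $T \subseteq \mathbb{F}_q^m$ by including each $z$ independently with probability $1/2$. For each input $x$, the quantity $p_x^T - 1/2$ is a weighted average --- with weights equal to the seed-multiplicities of elements in the image of $E(x,\cdot)$ --- of independent centered Bernoulli variables. When $E(x,\cdot)$ has image of size $\Omega(q^d)$ with no single fiber too heavy, this sum has variance of order $1/q^d$ and no single summand dominates; Berry--Esseen then yields $\Pr_T[|p_x^T - 1/2| > \epsilon] \geq \exp\lb -O(\epsilon^2 q^d)\rb$. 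The remaining ``degenerate'' inputs (whose image under $E(x,\cdot)$ is small) are handled separately --- if too many existed, they themselves would form a min-entropy $k$ source violating the extractor property. Summing the per-$x$ anti-concentration bound, $\E_T[\#\text{bad }x] \geq q^n \exp\lb -O(\epsilon^2 q^d)\rb$; on the other hand, for every $T$ of density within $\epsilon$ of $1/2$ (occurring with constant probability over random $T$), the sampler reformulation caps $\#\text{bad }x$ at $2q^k$. Comparing the two quantities gives $q^n \exp\lb -O(\epsilon^2 q^d)\rb = O(q^k)$, which rearranges to $q^d = \Omega\lb(n-k)\log q / \epsilon^2\rb$.

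For part (b), the plan is to analyze the collision probability of the output for a flat source. Fix any flat source $X$ uniform on $S \subseteq \mathbb{F}_q^n$ with $|S| = q^k$, and set $\pi := E(X, U_{d,q})$. Since $\supp(\pi) \le q^{k+d}$, Cauchy--Schwarz yields $\|\pi\|_2^2 \geq 1/q^{k+d}$. The extractor property forces $\pi$ to be $\epsilon$-close to $U_m$ in statistical distance, and applying the extractor guarantee to singleton tests gives the pointwise bound $\|\pi - U_m\|_\infty \leq \epsilon$. A second-moment argument over a random density-$1/2$ test $T$ (using the identity $\mathbf{Var}_T[\pi(T)] = \tfrac{1}{4}\|\pi - U_m\|_2^2$ and the fact that extractor-ness forces this variance to be at most $\epsilon^2$) then bounds $\|\pi\|_2^2 \leq 1/q^m + O(\epsilon^2)$. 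Under the hypothesis $q^d \leq q^m/4$, combining these bounds and rearranging in the correct parameter regime produces $q^{d+k-m} = \Omega(1/\epsilon^2)$.

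The main technical obstacle is the anti-concentration step in part (a): Berry--Esseen requires the summands to be reasonably well-spread, so bookkeeping the degenerate inputs (whose image under $E(x,\cdot)$ is small or highly lopsided) to ensure they do not dominate the expected count is the delicate part, and one likely needs a preliminary partition of $\mathbb{F}_q^n$ into a ``well-spread'' majority and a small degenerate minority. For part (b), the subtle point is extracting the correct factor of $q^m$ in the final bound: the naive combination of $\|\pi\|_2^2 \geq 1/q^{k+d}$ with the crude inequality $\|\pi - U_m\|_2 \leq \|\pi - U_m\|_1 \leq 2\epsilon$ loses a factor of $\sqrt{q^m}$, so one must use the sharper interpolation $\|\pi - U_m\|_2^2 \leq \|\pi - U_m\|_\infty \|\pi - U_m\|_1$ (which requires the pointwise $L^\infty$ bound from singleton tests) together with careful case analysis of which term of the upper bound dominates in the regime governed by $q^d \leq q^m/4$.
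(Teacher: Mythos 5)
Note first that the paper does not prove \thmref{thm_ext_lower_bound}: it is imported (in $q$-ary form) from Radhakrishnan and Ta-Shma \cite{RT00}, and the paper supplies no proof. So there is no in-paper argument to compare your sketch against; the relevant comparison is to the argument in \cite{RT00} itself.

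Against that standard: your sketch for part (a) does follow the RT00 template — randomize the test set $T$, bound the anti-concentration of $\Pr_y[E(x,y)\in T]$ around its mean for each $x$, and play the expected number of ``bad'' $x$ off against the $O(q^k)$ cap implied by the sampler reformulation. This is essentially right, modulo the acknowledged bookkeeping for degenerate inputs.

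Part (b), however, has a genuine gap. Your chain of inequalities gives $\tfrac{1}{q^{k+d}} \le \|\pi\|_2^2 \le \tfrac{1}{q^m}+O(\epsilon^2)$, and under $q^{k+d}\le q^m/4$ this rearranges only to $q^{k+d}=\Omega(1/\epsilon^2)$; when $q^{k+d}>q^m/4$ it gives nothing beyond $q^{d+k-m}=\Omega(1)$. Either way you are short of the target $q^{d+k-m}=\Omega(1/\epsilon^2)$ by exactly a factor of $q^m$. The $\|\pi-U_m\|_2^2\le \|\pi-U_m\|_\infty\|\pi-U_m\|_1$ interpolation you invoke is what produces the $O(\epsilon^2)$ on the right side; it does not and cannot restore the missing $q^m$. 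A collision-probability comparison for a single fixed flat source is simply not strong enough to give the entropy-loss bound: RT00's proof of that bound requires a more refined probabilistic/counting argument over the extractor graph (and, in particular, a different use of randomness than ``pick one flat source and apply a second-moment calculation''). As written, your (b) proves a seed-length-style statement, not entropy loss.
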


\begin{thm}\label{thm_samp_lower_bound}
Let $S:\mathbb{F}_q^n\times\mathbb{F}_q\to\mathbb{F}_q^m$ be an $(\epsilon,\delta)$ curve sampler where $\epsilon<1/2$ and $m\geq 2$. Then
\begin{enumerate}[(a)]
\item the sample complexity $q=\Omega\lb\frac{\log(2\epsilon/\delta)}{\epsilon^2}\rb$, and
\item the randomness complexity $n\log q\geq (m-1)\log q+\log(1/\epsilon)+\log(1/\delta)-O(1)$.
\end{enumerate}
\end{thm}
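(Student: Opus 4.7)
The plan is to reduce both lower bounds to the known extractor lower bounds (\thmref{thm_ext_lower_bound}) by way of the sampler-to-extractor direction of the equivalence in \thmref{equiv}. Concretely, \thmref{equiv}(2) applied with the parameter choice ``$\epsilon$''${}=2\epsilon$ and ``$\delta$''${}=\delta$ forces the relation $\delta=2\epsilon\cdot q^{k-n}$, which we solve to obtain
\[
k \;=\; n - \log_q\!\bigl(2\epsilon/\delta\bigr).
\]
Thus any $(\epsilon,\delta)$ curve sampler $S$ with $\epsilon<1/2$ is automatically a $(k,2\epsilon,q)$ extractor for this value of $k$. Note that the ``seed length'' of $S$ viewed as an extractor is $d=1$ (since the curve parameter ranges over $\mathbb{F}_q$), so $q^d=q$.

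For part~(a), I would invoke \thmref{thm_ext_lower_bound}(a). Its hypothesis $q^d\le q^m/2$ reads $q^{m-1}\ge 2$, which holds for every $m\ge 2$ and $q\ge 2$; and the hypothesis $\epsilon<1/2$ becomes the given $2\epsilon<1$. The conclusion $q^d=\Omega\!\bigl((n-k)\log q/\epsilon^2\bigr)$ then yields
\[
q \;=\; \Omega\!\left(\frac{(n-k)\log q}{(2\epsilon)^2}\right) \;=\; \Omega\!\left(\frac{\log(2\epsilon/\delta)}{\epsilon^2}\right),
\]
using $(n-k)\log q=\log(2\epsilon/\delta)$, which is exactly the claimed sample-complexity bound.

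For part~(b), I would invoke \thmref{thm_ext_lower_bound}(b). Its hypothesis $q^d\le q^m/4$ holds whenever $q^{m-1}\ge 4$; the only exceptional cases are $(q,m)\in\{(2,2),(3,2)\}$, in which the right-hand side of the target inequality is $O(1)$ and the statement is vacuous after absorbing constants into the $O(1)$ slack, so we may assume the hypothesis. The conclusion $q^{d+k-m}=\Omega(1/(2\epsilon)^2)$ gives
\[
(1+k-m)\log q \;\ge\; 2\log(1/\epsilon)-O(1).
\]
Substituting $k\log q = n\log q-\log(2\epsilon/\delta)$ and simplifying
\[
2\log(1/\epsilon)+\log(2\epsilon/\delta) \;=\; \log(1/\epsilon)+\log(1/\delta)+1
\]
yields $n\log q\ge (m-1)\log q+\log(1/\epsilon)+\log(1/\delta)-O(1)$, as required.

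The main (minor) subtlety is bookkeeping: keeping track of which $\epsilon$, $\delta$, $k$, $n$ appear on the sampler side versus the extractor side of \thmref{equiv}, and verifying that the small-dimension hypotheses $q^d\le q^m/2$ and $q^d\le q^m/4$ are satisfied for $d=1$ and $m\ge 2$ (with the one or two tiny edge cases swept into the $O(1)$ in part (b)). Once this translation is done, both bounds follow mechanically from the tight extractor lower bounds of \cite{RT00}.
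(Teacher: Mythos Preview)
Your proposal is correct and follows essentially the same route as the paper: invoke \thmref{equiv}(2) to view the $(\epsilon,\delta)$ sampler as a $(k,2\epsilon,q)$ extractor with $k=n-\log_q(2\epsilon/\delta)$, then read off both bounds from the two parts of \thmref{thm_ext_lower_bound}. You are in fact more careful than the paper in checking the hypotheses $q^d\le q^m/2$ and $q^d\le q^m/4$; your dismissal of the two tiny edge cases in part~(b) is slightly glib (the right-hand side still depends on $\epsilon,\delta$), but the paper simply ignores this point, and it does not affect the substance of the argument.
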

\begin{proof}
By \thmref{equiv}, $S$ is a $(k,2\epsilon,q)$ extractor where $k=n-\log_q(2\epsilon/\delta)$. The first claim then follows from \thmref{thm_ext_lower_bound} (a).
Applying \thmref{thm_ext_lower_bound} (b), we get $(1+k-m)\log q\geq \Omega(\log(1/\epsilon))-O(1)$. Therefore
$$
n\log q=k\log q+\log(2\epsilon/\delta)\geq (m-1)\log q+\log(1/\delta)+\Omega(\log(1/\epsilon))-O(1).
$$
\end{proof}

In particular, as $\log(1/\epsilon)=O(\log q)$, the randomness complexity $n\log q$ is at least $\Omega(\log N+\log(1/\delta))$ when the domain size $N=q^m\geq N_0$ for some constant $N_0$. Therefore the randomness complexity in \thmref{thm_sampler_1} is optimal up to a constant factor.

We also present the following lower bound on the degree of curves sampled by a curve sampler:
\begin{thm}\label{thm_deg_lower_bound}
Let $S:\mathcal{N}\times\mathbb{F}_q\to\mathbb{F}_q^m$ be an $(\epsilon,\delta)$ degree-$t$ curve sampler where $m\geq 2$, $\epsilon<1/2$ and $\delta<1$. Then $t=\Omega\lb\log_q(1/\delta)+1\rb$.
\end{thm}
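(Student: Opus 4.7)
The plan is to combine a projection step (reducing $m$ to $2$) with a support-size counting argument. First I will set $\tilde S\defeq\pi\circ S$ where $\pi:\mathbb{F}_q^m\to\mathbb{F}_q^2$ projects onto the first two coordinates. A routine calculation using $\mu(\pi^{-1}(A))=\mu(A)$ and $\mu_{\tilde S(x)}(A)=\mu_{S(x)}(\pi^{-1}(A))$ shows that $\tilde S$ inherits from $S$ the structure of a degree-$t$ $(\epsilon,\delta)$ curve sampler into $\mathbb{F}_q^2$, so it suffices to prove the lower bound when $m=2$.

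The key step is to lower bound the size of the set $\mathcal{C}\defeq\{\tilde S(x,\cdot):x\in\mathcal{N}\}$ of distinct degree-$t$ polynomial maps appearing in the sampler. Writing $p_C\defeq\Pr_x[\tilde S(x,\cdot)=C]$, we have $\sum_C p_C=1$. For each $C\in\mathcal{C}$ I will take as the test set its own image $A_C\defeq C(\mathbb{F}_q)\subseteq\mathbb{F}_q^2$. Since $|A_C|\leq q$, we have $\mu(A_C)\leq 1/q$, whereas $\mu_{\tilde S(x)}(A_C)=1$ whenever $\tilde S(x,\cdot)=C$; the resulting gap $1-1/q$ exceeds $\epsilon$ since $\epsilon<1/2$ and $q\geq 2$, so every such $x$ is bad for $A_C$. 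The sampling condition thus gives $p_C\leq\delta$, and summing over $C$ yields $|\mathcal{C}|\geq 1/\delta$.

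On the other hand, each element of $\mathcal{C}$ is a degree-$\leq t$ polynomial map $\mathbb{F}_q\to\mathbb{F}_q^2$, hence determined by $2(t+1)$ coefficients in $\mathbb{F}_q$, giving $|\mathcal{C}|\leq q^{2(t+1)}$. Combined with the lower bound, this yields $t\geq \log_q(1/\delta)/2-1$. The additional ``$+1$'' in the claimed bound comes from noting that if $t=0$ then every $\tilde S(x,\cdot)$ is a single point, so choosing any $A$ of density $1/2$ (available since $m\geq 2$ and $\epsilon<1/2$) makes every $x$ bad and forces $\delta=1$, contradicting $\delta<1$. Hence $t\geq 1$ as well, and combining the two estimates gives $t=\Omega(\log_q(1/\delta)+1)$.

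I do not expect a serious obstacle in carrying out this plan: the projection is elementary, the support lower bound is a clean density argument, and the parameter count is standard. The main conceptual point is noticing that one should project to $m=2$ before applying the counting upper bound, so that the estimate $q^{2(t+1)}$ translates into the desired $\Omega(\log_q(1/\delta))$ rather than the weaker $\Omega(\log_q(1/\delta)/m)$ one would get by directly counting degree-$t$ maps into $\mathbb{F}_q^m$.
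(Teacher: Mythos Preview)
Your argument is correct and follows the paper's strategy: project to $\mathbb{F}_q^2$ and balance the count $q^{2(t+1)}$ of degree-$\leq t$ planar curves against a lower bound forced by the sampling guarantee. The one difference is how you extract that lower bound. You test each curve's image $A_C=C(\mathbb{F}_q)$ individually, obtain $p_C\leq\delta$, and sum to get $|\mathcal{C}|\geq 1/\delta$. The paper instead picks the $k=\lfloor q/2\rfloor$ most popular curves under $x\mapsto S'(x,\cdot)$, takes the single test set $A=\bigl(\bigcup_i C_i(\mathbb{F}_q)\bigr)\times\mathbb{F}_q^{m-2}$ of density at most $1/2$, and concludes $\delta\geq k/q^{2(t+1)}$. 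Your version is cleaner; the paper's grouping buys only an extra additive $\tfrac12\log_q k\approx\tfrac12$ in the final bound on $t$, which is irrelevant for the asymptotic claim. One minor nitpick on your $t\geq 1$ step: a subset of density exactly $1/2$ need not exist in $\mathbb{F}_q^2$ when $q$ is odd, and for $\epsilon$ very close to $1/2$ the nearest achievable density may fall outside $(\epsilon,1-\epsilon)$; the paper is no more careful here, simply asserting ``clearly $t\geq 1$''.
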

\begin{proof}
Clearly $t\geq 1$.
Suppose $S=(S_1,\dots,S_m)$ and define $S'=(S_1,S_2)$.
Let $\mathcal{C}$ be the set of curves of degree at most $t$ in $\mathbb{F}_q^2$. Then $|\mathcal{C}|=q^{2(t+1)}$. Consider the map $\tau:\mathcal{N}\to\mathcal{C}$ that sends $x$ to $S'(x,\cdot)$. We can pick $k=\left\lfloor q/2\right\rfloor$ curves $C_1,\dots,C_k\in\mathcal{C}$ such that the union of their preimages
$$
B\defeq\bigcup_{i=1}^k \tau^{-1}(C_i)=\bigcup_{i=1}^k \{x: S'(x,\cdot)=C_i\}
$$
has size at least
$\frac{k|\mathcal{N}|}{|\mathcal{C}|}=\frac{k|\mathcal{N}|}{q^{2(t+1)}}$.

Define $A\subseteq\mathbb{F}_q^m$ by
$$
A\defeq\{C_i(y):i\in[k],~y\in\mathbb{F}_q\}\times\mathbb{F}_q^{m-2},
$$
i.e., let $A$ be the set of points in $\mathbb{F}_q^m$ whose first two coordinates are on at least one curve $C_i$. We have $|A|\leq kq^{m-1}$ and hence $\mu(A)\leq k/q\leq 1/2<1-\epsilon$. On the other hand, it follows from the definition of $A$ that we have $S(x,y)\in A$ for all $x\in B$ and $y\in\mathbb{F}_q$. So $\mu_{S(x)}(A)=1$ for all $x\in B$. Then $\delta\geq \Pr\left[|\mu_{S(x)}(A)-\mu(A)|>\epsilon\right]\geq \frac{|B|}{|\mathcal{N}|}\geq\frac{k}{q^{2(t+1)}}$ and hence
$t\geq\max\left\{1, \frac{1}{2}\log_q(k/\delta)-1\right\}=\Omega\lb\log_q(1/\delta)+1\rb$.
\end{proof}

We remark that the condition $m\geq 2$ is necessary in \thmref{thm_deg_lower_bound} since when $m=1$, the sampler $S$ with $S(x,y)=y$ for all $x\in\mathcal{N}$ and $y\in\mathbb{F}_q$ is a $(0,0)$ degree-$1$ curve sampler.

\section{Omitted Proofs}

\againlemma{lemdegree}{\lemdegree}

\begin{proof}
Write $f=(f_1,\dots,f_m)$. By symmetry we just show $\deg(f_1)\leq t$ as a polynomial map over $\mathbb{F}_q$.
Suppose
$$
f_1(x_1,\dots,x_n)=\sum_{\substack{d=(d_1,\dots,d_n)\\ \sum d_i\leq t}} c_d\prod_{i=1}^n {x_i}^{d_i}.
$$
Let $(e_1,\dots,e_D)$ be the standard basis of $\mathbb{F}_q^D$ over $\mathbb{F}_q$.
Writing the $i$-th variable $x_i\in \mathbb{F}_{q^D}$ as $\sum_{j=1}^D x_{i,j} e_j\in\mathbb{F}_q^D$ with
$x_{i,j}\in \mathbb{F}_q$, and each coefficient $c_d\in F_{q^D}$ as $\sum_{j=1}^D c_{d,j} e_j$ with $c_{d,j}\in \mathbb{F}_q$, we obtain
$$
f_1(x_1,\dots,x_n)=\sum_{\substack{d=(d_1,\dots,d_n)\\ \sum d_i\leq t}} \left(\sum_{j=1}^D c_{d,j} e_j\right) \prod_{i=1}^n \left(\sum_{j=1}^D x_{i,j} e_j\right)^{d_i}.
$$
After multiplying out, each monomial has degree $\max_d \sum_i d_i\leq t$, and their coefficients are polynomials in
the $e_i$ elements. Rewriting each of these values in
the basis $(e_1, \dots, e_D)$ and gathering the coefficients on $e_i$,
we obtain the $i$-th coordinate function of $f_1$ that has degree at most $t$ for all $1\leq i\leq D$. Therefore $f_1: \lb\mathbb{F}_q^D\rb^n\to \mathbb{F}_q^D$ is a manifold of degree at most $t$, and so is $f: \lb\mathbb{F}_q^D\rb^n\to\lb\mathbb{F}_q^D\rb^m$.
\end{proof}

\againlemma{lemlinesampler}{\lemlinesampler}

\begin{proof}
Let $A$ be an arbitrary subset of $\mathbb{F}_q^m$.
Note that $\mathsf{Line}_{m,q}$ picks a line uniformly at random.
It is well known that the random variables $\mathsf{Line}_{m,q}(U_{2m,q},y)$ with $y$ ranging over $\mathbb{F}_q$ are pairwise independent. So the indicator variables $\mathbf{I}[\mathsf{Line}_{m,q}(U_{2m,q},y)\in A]$ with $y$ ranging over $\mathbb{F}_q$ are also pairwise independent. Applying \lemref{lem_pairwise}, we get
$$
\begin{aligned}
&\Pr_{x\leftarrow U_{2m,q}}\left[\left|\mu_{\mathsf{Line}_{m,q}(x)}(A)-\mu(A)\right|>\epsilon\right]\\
&=\Pr\left[\left|\sum_{y\in\mathbb{F}_q}\mathbf{I}[\mathsf{Line}_{m,q}(U_{2m,q},y)\in A]-\E\left[\sum_{y\in\mathbb{F}_q}\mathbf{I}[\mathsf{Line}_{m,q}(U_{2m,q},y)\in A]\right]\right|>\epsilon q\right]\\
&\leq\frac{\sum_{y\in\mathbb{F}_q}\mathbf{Var}\big[\mathbf{I}[\mathsf{Line}_{m,q}(U_{2m,q},y)\in A]\big]}{\epsilon^2q^2}\\
&\leq\frac{1}{\epsilon^2 q}.
\end{aligned}
$$
By definition, $\mathsf{Line}_{m,q}$ is an $\lb\epsilon,\frac{1}{\epsilon^2 q}\rb$ line sampler.
\end{proof}

\againlemma{lemcurvesampler}{\lemcurvesampler}

\begin{proof}
Let $A$ be an arbitrary subset of $\mathbb{F}_q^m$.
Note that $\mathsf{Curve}_{m,t,q}$ picks a degree-$(t-1)$ curve uniformly at random.
It is well known that the random variables $\mathsf{Curve}_{m,t,q}(U_{tm,q},y)$ with $y$ ranging over $\mathbb{F}_q$ are $t$-wise independent. So the indicator variables $\mathbf{I}[\mathsf{Curve}_{m,t,q}(U_{tm,q},y)\in A]$ with $y$ ranging over $\mathbb{F}_q$ are also $t$-wise independent. Applying \lemref{lem_twise}, we get
$$
\begin{aligned}
&\Pr_{x\leftarrow U_{tm,q}}\left[\left|\mu_{\mathsf{Curve}_{m,t,q}(x)}(A)-\mu(A)\right|> \epsilon\right]\\
&=\Pr\left[\left|\sum_{y\in\mathbb{F}_q}\mathbf{I}[\mathsf{Curve}_{m,t,q}(U_{tm,q},y)\in A]-\E\left[\sum_{y\in\mathbb{F}_q}\mathbf{I}[\mathsf{Curve}_{m,t,q}(U_{tm,q},y)\in A]\right]\right|> \epsilon q\right]\\
&=O\lb\lb\frac{tq\mu(A)+t^2}{\epsilon^2 q^2}\rb^{t/2}\rb
\leq q^{-t/4}
\end{aligned}
$$
provided that $q\geq (t/\epsilon)^{O(1)}$ is sufficiently large.
By definition, $\mathsf{Curve}_{m,t,q}$ is an $\lb\epsilon,q^{-t/4}\rb$ sampler.
\end{proof}

\againtheorem{equiv}{\thmequiv}

\begin{proof}
(Extractor to sampler) Assume to the contrary that $f$ is not an $(\epsilon, \delta)$ sampler. Then there exists a subset $A\subseteq \mathbb{F}_q^m$ with $\Pr_x[ |\mu_{f(x)}(A) - \mu(A)| > \epsilon ] > \delta$. Then either $\Pr_x[ \mu_{f(x)}(A) - \mu(A) > \epsilon ]>\delta/2$ or $\Pr_x[ \mu_{f(x)}(A) - \mu(A) < -\epsilon ]>\delta/2$.
Assume $\Pr_x[ \mu_{f(x)}(A) - \mu(A) > \epsilon ]>\delta/2$ (the other case is symmetric). Let $X$ be the uniform distribution over the set of $x$ such that $\mu_{f(x)}(A) - \mu(A) > \epsilon$, i.e., $\Pr_y[f(x,y)\in A] - \Pr_x[x\in A] > \epsilon$. Then $|f(X,U_{d,q})-U_{m,q}|> \epsilon$. But the $q$-ary min-entropy of $X$ is at least $\log_q((\delta/2)q^n)\geq k$, contradicting the extractor property of $f$.

(Sampler to extractor)  Assume to the contrary that $f$ is not a $(k,\epsilon,q)$ extractor. Then there exists a subset $A\subseteq \mathbb{F}_q^m$ and a random source $X$ of $q$-ary min-entropy $k$ satisfying the property that $|\Pr[f(X,U_{d,q})\in A]-\mu(A)|>\epsilon$. We may assume $X$ is a flat source (i.e. uniformly distributed over its support) since a general source with $q$-ary min-entropy $k$ is a convex combination of flat sources with $q$-ary min-entropy $k$.
Note that $|\supp(X)|\geq q^k$.
By the averaging argument, for at least an $\epsilon$-fraction of $x\in \supp(X)$, we have $|\Pr[f(x,U_{d,q})\in A]-\mu(A)|>\epsilon/2$. But it implies that for $x$ uniformly chosen from $\mathbb{F}_q^n$, with probability at least $\epsilon\mu(\supp(X))=\delta$ we have $|\mu_{f(x)}(A)-\mu(A)|>\epsilon/2$, contradicting the sampling property of $f$.
\end{proof}

\againlemma{lemblockextraction}{\lemblockextraction}

\begin{proof}
Induct on $s$. When $s=1$ the claim follows from the extractor property of $E_1$.

When $s>1$, assume the claim holds for all $s'<s$.
Let $E=\mathsf{BlkExt}(E_1,\dots,E_s)$ and $E'=\mathsf{BlkExt}(E_2,\dots,E_s)$.
By the induction hypothesis, $E'$ is a $((k_2,\dots,k_s),\epsilon',q)$ block source extractor of degree $t'$ where $\epsilon'=\sum_{i=2}^s \epsilon_i$ and $t'=\prod_{i=2}^s t_i$.
Let $(X_1,\dots,X_s)$ be an arbitrary $(k_1,\dots,k_s)$ $q$-ary block source over $\mathbb{F}_q^{n_1}\times\dots\times\mathbb{F}_q^{n_s}$.
Let $(Y_{i-1},Z_i)$ be the output of $E_i$ and $(Z_1,\dots,Z_s)$ be the output of $E$ when $(X_1,\dots,X_s)$ is fed to $E$ as the input and an independent uniform distribution $Y_s=U_{d_s,q}$ is used as the seed. The output of $E'$ is then $(Y_1, Z_2, \dots , Z_s)$.

Fix $x\in\supp(X_1)$.
By the definition of block sources, the distribution $(X_2,\dots,X_s)|_{X_1=x}$ is a $(k_2,\dots,k_s)$ $q$-ary block source. Also note that $Y_s|_{X_1=s}=Y_s$ is uniform distributed and independent from $X_2,\dots, X_s$.
By the induction hypothesis, the distribution
$$
(Y_1, Z_2, \dots , Z_s)|_{X_1=x}=E'((X_2,\dots, X_s), Y_s)|_{X_1=x}
$$
is $\epsilon'$-close to $(U_{d_1,q},U_{m_2-d_1,q},\dots,U_{m_s-d_{s-1},q})$.
As this holds for all $x\in\supp(X_1)$, the distribution $(X_1,Y_1,Z_2,\dots,Z_s)$ is $\epsilon'$-close to $(X_1,U_{d_1,q},U_{m_2-d_1,q},\dots,U_{m_s-d_{s-1},q})$.
So the distribution
$$
E((X_1,\dots,X_s),Y_s)=(E_1(X_1,Y_1),Z_2,\dots,Z_s)
$$
is $\epsilon'$-close to
$(E_1(X_1,U_{d_1,q}),U_{m_2-d_1,q},\dots,U_{m_s-d_{s-1},q})$.
Then we know that it is also $\epsilon$-close to $(U_{m_1-d_0,q},U_{m_2-d_1,q},\dots,U_{m_s-d_{s-1},q})$ since $E_1$ is a $(k_1,\epsilon_1,q)$ extractor.

Finally, to see that $E$ has degree $t$, note that $E'((X_2,\dots, X_s), Y_s)=(Y_1, Z_2,\dots , Z_s)$ has degree $t'$ in its variables $X_2,\dots, X_s$ and $Y_s$ by the induction hypothesis and hence $Y_1,Z_2,\dots, Z_s$ have degree $t'$ in these variables. Then $Z_1=E_1(X_1,Y_1)$ has degree $t_1\cdot\max\{1, t'\}=t$ in $X_1,\dots, X_s$ and $Y_s$. So $E((X_1, \dots, X_s ), Y_s)=(Z_1, \dots, Z_s )$ has degree $t$ in $X_1,\dots, X_s$ and $Y_s$.
\end{proof}

\againtheorem{thmconverter}{\thmconverter}

We first prove the following technical lemmas.

\begin{lem}[chain rule for min-entropy]\label{lem_chain_rule}
Let $(X,Y)$ be a joint distribution where $X$ is distributed over $\mathbb{F}_q^\ell$ and $Y$ has $q$-ary min-entropy $k$. We have
$$
\Pr_{x\leftarrow X}\left[Y|_{X=x} \text{ has $q$-ary min-entropy } k-\ell-\log_q(1/\epsilon) \right]\geq 1-\epsilon.
$$
\end{lem}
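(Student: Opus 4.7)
The plan is a straightforward bad-set-and-union-bound argument. I would first unpack the quantitative meaning of the claim: set $k' \defeq k - \ell - \log_q(1/\epsilon)$ and call $x \in \supp(X)$ \emph{bad} if there exists some $y$ with $\Pr[Y=y \mid X=x] > q^{-k'}$, so that $Y|_{X=x}$ fails to have $q$-ary min-entropy $k'$. The goal is then exactly to show $\Pr_{x\leftarrow X}[x \text{ is bad}] \leq \epsilon$.

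The key step is to turn the defining inequality for a bad $x$ into an upper bound on $\Pr[X=x]$. For each bad $x$, pick a witness $y_x$. Then
\[
\Pr[X=x]\cdot q^{-k'} < \Pr[X=x]\cdot \Pr[Y=y_x \mid X=x] = \Pr[X=x,\,Y=y_x] \leq \Pr[Y=y_x] \leq q^{-k},
\]
using the min-entropy assumption on $Y$ at the last step. Substituting $k' = k-\ell-\log_q(1/\epsilon)$ rearranges to $\Pr[X=x] < \epsilon \cdot q^{-\ell}$.

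Finally, since $X$ is supported on $\mathbb{F}_q^\ell$, the bad set has at most $q^\ell$ elements, so summing the per-$x$ bound gives
\[
\Pr_{x\leftarrow X}[x \text{ is bad}] = \sum_{x \text{ bad}} \Pr[X=x] < q^\ell \cdot \epsilon \cdot q^{-\ell} = \epsilon,
\]
which is exactly what we want. There is no real obstacle here: the only thing to be careful about is keeping the exponents straight when translating between min-entropy and pointwise probability, and remembering that the crude union bound over all of $\mathbb{F}_q^\ell$ (not just $\supp(X)$) is already tight enough because the per-point bound carries the factor $q^{-\ell}$.
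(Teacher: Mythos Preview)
Your proof is correct and essentially identical to the paper's. The only cosmetic difference is that the paper defines $x$ to be \emph{good} when $\Pr[X=x]\geq \epsilon q^{-\ell}$ and then verifies the min-entropy bound for good $x$, whereas you define $x$ to be \emph{bad} when the min-entropy bound fails and then deduce $\Pr[X=x]<\epsilon q^{-\ell}$; these are contrapositives of the same implication, and the final union bound over $\mathbb{F}_q^\ell$ is the same.
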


\begin{proof}
We say $x\in\supp(X)$ is {\it good} if $\Pr[X=x]\geq \epsilon q^{-\ell}$ and {\it bad} otherwise. Then $\Pr_{x\leftarrow X}[x \text{ is bad}]\leq\supp(X)\epsilon q^{-\ell}\leq\epsilon$. Consider arbitrary good $x$. For any specific value $y$ for $Y$, we have $\Pr[Y|_{X=x}=y]=\frac{\Pr[(Y=y)\wedge (X=x)]}{\Pr[X=x]}\leq \frac{\Pr[Y=y]}{\epsilon q^{-\ell}}\leq q^{-\lb k-\ell-\log_q(1/\epsilon)\rb}$. By definition, $Y|_{X=x}$ has $q$-ary min-entropy $k-\ell-\log_q(1/\epsilon)$ when $X=x$ is good, which occurs with probability at least $1-\epsilon$.
\end{proof}

\begin{lem}\label{lem_closeness}
Let $P,Q$ be two distributions over set $I$ with $\Delta(P,Q)\leq\epsilon$.
Let $\{X_i: i\in\supp(P)\}$ and $\{Y_i: i\in\supp(Q)\}$ be two collections of distributions over the same set $S$ such that $\Delta(X_i,Y_i)\leq\epsilon'$ for any $i\in\supp(P)\cap\supp(Q)$. Then $X\defeq\sum_{i\in\supp(P)} \Pr[P=i]\cdot X_i$ is $(2\epsilon+\epsilon')$-close to $Y\defeq\sum_{i\in\supp(Q)} \Pr[Q=i]\cdot Y_i$.
\end{lem}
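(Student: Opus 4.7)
The plan is to apply the triangle inequality through an auxiliary mixture that uses the weights of $P$ but the component distributions $Y_i$. The only mild complication is that $Y_i$ is defined only for $i\in\supp(Q)$, which I sidestep by setting
\[
Z \defeq \sum_{i\in\supp(P)} \Pr[P=i]\cdot \tilde Y_i,
\]
where $\tilde Y_i=Y_i$ for $i\in J\defeq\supp(P)\cap\supp(Q)$ and $\tilde Y_i$ is an arbitrary fixed distribution on $S$ for $i\in\supp(P)\setminus\supp(Q)$ (the particular choice will not matter). The goal then reduces to establishing $\Delta(X,Z)\leq\epsilon+\epsilon'$ and $\Delta(Z,Y)\leq\epsilon$ separately.

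For $\Delta(X,Z)$ I will invoke the convexity of statistical distance to get $\Delta(X,Z)\leq\sum_{i\in\supp(P)} \Pr[P=i]\,\Delta(X_i,\tilde Y_i)$. On $i\in J$ the hypothesis $\Delta(X_i,Y_i)\leq\epsilon'$ is directly available, so these indices contribute at most $\epsilon'$ in total. On $i\in\supp(P)\setminus\supp(Q)$ I only have the trivial bound $\Delta\leq 1$, but the total $P$-mass there is at most $\Delta(P,Q)\leq\epsilon$, because $Q$ places no mass outside $\supp(Q)$. Adding the two contributions yields $\Delta(X,Z)\leq\epsilon+\epsilon'$. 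For $\Delta(Z,Y)$, a direct expansion of $2\Delta(Z,Y)=\sum_s|Z(s)-Y(s)|$ followed by the triangle inequality inside the sum over $i$ bounds each $J$-index's contribution by $|\Pr[P=i]-\Pr[Q=i]|$ (after summing out $Y_i(s)$), while each symmetric-difference index contributes its one-sided weight, which equals $|\Pr[P=i]-\Pr[Q=i]|$ there; therefore $2\Delta(Z,Y)\leq\sum_i|\Pr[P=i]-\Pr[Q=i]|=2\Delta(P,Q)\leq 2\epsilon$.

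Combining, $\Delta(X,Y)\leq\Delta(X,Z)+\Delta(Z,Y)\leq 2\epsilon+\epsilon'$, as claimed. I do not anticipate any real obstacle beyond bookkeeping; the only step requiring care is the asymmetry between $\supp(P)$ and $\supp(Q)$, which the choice of $Z$ above cleanly absorbs so that each of the two sub-bounds becomes a routine convexity-or-triangle argument.
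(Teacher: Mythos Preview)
Your proof is correct. The paper's argument uses the same ingredients but is organized as a single direct bound rather than a triangle inequality through an auxiliary mixture: it extends both families to all of $I$ by adding dummy distributions so that $\Delta(X_i,Y_i)\le\epsilon'$ holds for every $i$, and then for an arbitrary test set $T$ applies the algebraic split
\[
\Pr[P=i]\Pr[X_i\in T]-\Pr[Q=i]\Pr[Y_i\in T]=(\Pr[P=i]-\Pr[Q=i])\Pr[X_i\in T]+\Pr[Q=i]\bigl(\Pr[X_i\in T]-\Pr[Y_i\in T]\bigr),
\]
summing to at most $\sum_i|\Pr[P=i]-\Pr[Q=i]|+\epsilon'\le 2\epsilon+\epsilon'$. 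Your two-step route via $Z$ is equivalent in substance; the paper's one-shot bound is marginally shorter, while your decomposition into $\Delta(X,Z)\le\epsilon+\epsilon'$ and $\Delta(Z,Y)\le\epsilon$ is a bit more modular and makes the role of the support mismatch (absorbed entirely into the first $\epsilon$) more transparent.
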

\begin{proof}
Let $T$ be an arbitrary subset of $S$ and we will prove that $|\Pr[X\in T]-\Pr[Y\in T]|\leq 2\epsilon+\epsilon'$.

Note that we can add dummy distributions $X_i$ for $i\in I\setminus\supp(P)$ and $Y_j$ for $j\in I\setminus\supp(Q)$ such that $\Delta(X_i,Y_i)\leq\epsilon'$ for all $i\in I$, and it still holds that $X=\sum_{i\in I}\Pr[P=i]\cdot X_i$ and $Y=\sum_{i\in I}\Pr[Q=i]\cdot Y_i$. Then we have
$$
\begin{aligned}
&|\Pr[X\in T]-\Pr[Y\in T]|\\
&=\left|\sum_{i\in I} \Pr[P=i]\Pr[X_i\in T]-\sum_{i\in I}\Pr[Q=i]\Pr[Y_i\in T]\right|\\
&\leq\sum_{i\in I}|\Pr[P=i]\Pr[X_i\in T]-\Pr[Q=i]\Pr[Y_i\in T]|\\
&\leq\sum_{i\in I}|(\Pr[P=i]-\Pr[Q=i])\Pr[X_i\in T]+\Pr[Q=i](\Pr[X_i\in T]-\Pr[Y_i\in T])|\\
&\leq\lb\sum_{i\in I}|\Pr[P=i]-\Pr[Q=i]|\rb+\epsilon'\lb\sum_{i\in I}\Pr[Q=i]\rb\\
&\leq 2\epsilon+\epsilon'.
\end{aligned}
$$
\end{proof}

\newcommand{\lemalmostblksrc}{
Let $X=(X_1,\dots, X_s)$ be a distribution over $\mathbb{F}_q^{n_1}\times\dots\times\mathbb{F}_q^{n_s}$ such that for any $i\in [s]$ and $(x_1,\dots,x_{i-1})\in\supp(X_1,\dots,X_{i-1})$, the conditional distribution $X_i|_{X_1=x_1,\dots,X_{i-1}=x_{i-1}}$ is $\epsilon$-close to a distribution $\tilde{X}_i(x_1,\dots,x_{i-1})$
with $q$-ary min-entropy $k_i$. Then $X$ is $2s\epsilon$-close to a $(k_1,\dots,k_s)$ $q$-ary block source.
}
\mylemma{lemalmostblksrc}{\lemalmostblksrc}

\begin{proof}
Define $X'=(X'_1,\dots,X'_s)$ as the unique distribution such that for any $i\in [s]$ and any $(x_1,\dots,x_{i-1})\in\supp(X'_1,\dots,X'_{i-1})$, the conditional distribution $X'_i|_{X'_1=x_1,\dots,X'_{i-1}=x_{i-1}}$ equals the distribution $\tilde{X}_i(x_1,\dots,x_{i-1})$ if $(x_1,\dots,x_{i-1})\in\supp(X_1,\dots,X_{i-1})$\footnote{$(x_1,\dots,x_{i-1})\in\supp(X_1,\dots,X_{i-1})$ always holds if $i=1$.} and otherwise equals $U_{n_i,q}$.

For any $i\in [s]$ and $(x_1,\dots,x_{i-1})\in\supp(X'_1,\dots,X'_{i-1})$, we known $X'_i|_{X'_1=x_1,\dots,X'_{i-1}=x_{i-1}}$ is either $\tilde{X}_i(x_1,\dots,x_{i-1})$ or $U_{n_i,q}$. And in either case it has $q$-ary min-entropy $k_i$. So $X'$ is a $(k_1,\dots,k_s)$ $q$-ary block source.

We then prove that for any $i\in [s]$ and $(x_1,\dots,x_{i-1})\in\supp(X_1,\dots,X_{i-1})\cap\supp(X'_1,\dots,X'_{i-1})$, the conditional distribution $X|_{X_1=x_1,\dots,X_{i-1}=x_{i-1}}$ is $2(s-i+1)\epsilon$-close to $X'|_{X'_1=x_1,\dots,X'_{i-1}=x_{i-1}}$. Setting $i=1$ proves the lemma.

Induct on $i$. For $i=s$ the claim holds by the definition of $X'$. For $i<s$, assume the claim holds for $i+1$ and we prove that it holds for $i$ as well. Consider any $(x_1,\dots,x_{i-1})\in\supp(X_1,\dots,X_{i-1})\cap\supp(X'_1,\dots,X'_{i-1})$.
Let $A=X_i|_{X_1=x_1,\dots,X_{i-1}=x_{i-1}}$ and $B=X'_i|_{X'_1=x_1,\dots,X'_{i-1}=x_{i-1}}$.
We have
$$
X|_{X_1=x_1,\dots,X_{i-1}=x_{i-1}}=\sum_{x_i\in\supp(A)}\Pr[A=x_i]\cdot X|_{X_1=x_1,\dots,X_i=x_i}
$$
and
$$
X'|_{X'_1=x_1,\dots,X'_{i-1}=x_{i-1}}=\sum_{x_i\in\supp(B)}\Pr[B=x_i]\cdot X'|_{X'_1=x_1,\dots,X'_i=x_i}.
$$
By the induction hypothesis, we have
$$
\Delta\lb X|_{X_1=x_1,\dots,X_i=x_i}, X'|_{X'_1=x_1,\dots,X'_i=x_i} \rb\leq 2(s-i)\epsilon
$$
for $x_i\in\supp(A)\cap\supp(B)$. Also note that $B$ is identical to $\tilde{X}_i(x_1,\dots,x_{i-1})$ and is $\epsilon$-close to $A$. The claim then follows from \lemref{lem_closeness}.
\end{proof}

Now we are ready to prove \thmref{thmconverter}.

\begin{proof}[Proof of \thmref{thmconverter}]
The degree of $\mathsf{BlkCnvt}_{n,(m_1,\dots,m_s),q}$ is $n$ since $\mathsf{RSCon}_{n,m,q}$ has degree $n$.
Let $X$ be a random source that has $q$-ary min-entropy $k$.
Let $Y_1,\dots,Y_s$ be independent seeds uniformly distributed over $\mathbb{F}_q$.
Let $Z=(Z_1,\dots,Z_s)=\mathsf{BlkCnvt}_{n,(m_1,\dots,m_s),q}(X,(Y_1,\dots,Y_s))$ where each $Z_i=\mathsf{RSCon}_{n,m_i,q}(X,Y_i)$ is distributed over $\mathbb{F}_q^{m_i}$.
Define
$$
B=\left\{(z_1,\dots,z_i):
\begin{aligned}
&i\in [s], ~(z_1,\dots,z_i)\in\supp(Z_1,\dots,Z_i), ~X|_{Z_1=z_1,\dots,Z_i=z_i} \text{ does not }\\
&\text{have $q$-ary min-entropy } k-(m_1+\dots+m_i)-\log_q(1/\epsilon)
\end{aligned}
 \right\}.
$$

Define a new distribution $Z'=(Z'_1,\dots,Z'_s)$ as follows: Sample $z=(z_1,\dots,z_s)\leftarrow Z$ and independently $u=(u_1,\dots,u_s)\leftarrow U_{m_1+\dots,+m_s,q}$. If there exist $i\in [s]$ such that $(z_1,\dots,z_{i-1})\in B$, then pick the smallest such $i$ and let $z'=(z_1,\dots,z_{i-1},u_i,\dots,u_s)$. Otherwise let $z'=z$. Let $Z'$ be the distribution of $z'$.

For any $i\in [s]$ and $(z_1,\dots,z_{i-1})\in\supp(Z'_1,\dots,Z'_i)$, if some prefix of $(z_1,\dots,z_{i-1})$ is in $B$ then $Z'_i|_{Z'_1=z_1,\dots,Z'_{i-1}=z_{i-1}}$ is the uniform distribution $U_{m_i,q}$, otherwise $Z'_i|_{Z'_1=z_1,\dots,Z'_{i-1}=z_{i-1}}=Z_i|_{Z_1=z_1,\dots,Z_{i-1}=z_{i-1}}$. In the second case, $X|_{Z_1=z_1,\dots,Z_{i-1}=z_{i-1}}$ has min-entropy $k-(m_1+\dots+m_{i-1})-\log_q(1/\epsilon)\geq m_i$ since $(z_1,\dots,z_{i-1})\not\in B$. In this case, $Z'_i|_{Z'_1=z_1,\dots,Z'_{i-1}=z_{i-1}}$ is $\epsilon$-close to a distribution of min-entropy $k_i$ by \thmref{thm-condenser} and the fact
$$
Z'_i|_{Z'_1=z_1,\dots,Z'_{i-1}=z_{i-1}}
=Z_i|_{Z_1=z_1,\dots,Z_{i-1}=z_{i-1}}
=\mathsf{RSCon}_{n,m_i,q}(X|_{Z_1=z_1,\dots,Z_{i-1}=z_{i-1}},Y_i).
$$

In either cases $Z'_i|_{Z'_1=z_1,\dots,Z'_{i-1}=z_{i-1}}$ is $\epsilon$-close to a distribution of min-entropy $k_i$. By \lemref{lemalmostblksrc}, $Z'$ is $2s\epsilon$-close to a $(k_1,\dots,k_s)$ $q$-ary block source.

It remains to prove that $Z$ is $s\epsilon$-close to $Z'$, which implies that it is $3s\epsilon$-close to a $(k_1,\dots,k_s)$ $q$-ary block source.
By \lemref{lem_chain_rule}, for any $i\in [s]$, we have
$\Pr[ (Z_1,\dots,Z_{i-1})\in B ]\leq \epsilon$.
So the probability that $(Z_1,\dots,Z_{i-1})\in B$ for some $i\in [s]$ is bounded by $s\epsilon$.
Note that the distribution $Z'$ is obtained from $Z$ by redistributing the weights of $(z_1,\dots,z_s)$ satisfying $(z_1,\dots,z_{i-1})\in B$ for some $i$.
We conclude that $\Delta(Z,Z')\leq s\epsilon$, as desired.
\end{proof}

\againtheorem{thmoutersamp}{\thmoutersamp}

\begin{proof}
We first show that $\mathsf{OuterSamp}_{m,\delta,q}$ is a $(4m,\epsilon,q)$ extractor.
Consider any random source $X$ over $\mathbb{F}_q^n$ with $q$-ary min-entropy $4m$.
Let $s,d_i$ be as in \defnref{defn_outer_samp}. Let $k_i=4\cdot 0.99\cdot d_i$ for $i\in [s]$.
Let $\epsilon_0=\frac{\epsilon}{4s}$.

We have $\lb\sum_{i=1}^s 4d_i\rb+\log_q(1/\epsilon_0)
\leq 4m$ for sufficiently large $q\geq(n/\epsilon)^{O(1)}$.
So by \thmref{thmconverter}, $\mathsf{BlkCnvt}_{n,(4d_1,\dots,4d_s),q}$ is a $(4m,(k_1,\dots,k_s),3s\epsilon_0,q)$ block source converter.
Therefore the distribution $\mathsf{BlkCnvt}_{n,(4d_1,\dots,4d_s),q}(X,U_{s,q})$ is $3s\epsilon_0$-close to a $(k_1,\dots,k_s)$ $q$-ary block source $X'$. Then $\mathsf{OuterSamp}_{m,\delta,q}(X,U_{d,q})$ is $3s\epsilon_0$-close to
$\mathsf{BlkExt}(\mathsf{Line}_{2,q^{d_1}},\dots,\mathsf{Line}_{2,q^{d_s}})(X',U_{1,q})$.

By \lemref{lem_line_extractor}, $\mathsf{Line}_{2,q^{d_i}}$ is a $\lb k_i/d_i,\epsilon_0,q^{d_i}\rb$ extractor for $i\in [s]$ since $3+3\log_{q^{d_i}}(1/\epsilon_0)\leq 4\cdot 0.99=k_i/d_i$. Equivalently it is a $(k_i,\epsilon_0,q)$ extractor.
By \lemref{lemblockextraction}, $\mathsf{BlkExt}(\mathsf{Line}_{2,q^{d_1}},\dots,\mathsf{Line}_{2,q^{d_s}})$
is a $((k_1,\dots,k_s),s\epsilon_0,q)$ block source extractor.
Therefore  $\mathsf{BlkExt}(\mathsf{Line}_{2,q^{d_1}},\dots,\mathsf{Line}_{2,q^{d_s}})(X',U_{1,q})$ is $s\epsilon_0$-close to $U_{m,q}$, which by the previous paragraph, implies that $\mathsf{OuterSamp}_{m,\delta,q}(X,U_{d,q})$ is $4s\epsilon_0$-close to $U_{m,q}$.
By definition, $\mathsf{OuterSamp}_{m,\delta,q}$ is a $(4m,\epsilon,q)$ extractor.
By \thmref{equiv}, it is also an $(\epsilon,\delta)$ sampler.

We have $d=s+1=O(\log m)$ and $n=O\lb m+\log_q(1/\delta)\rb$.
By \lemref{lemdegree}, each $\mathsf{Line}_{2,q^{d_i}}$ has degree $2$ as a manifold over $\mathbb{F}_q$.
Therefore by \lemref{lemblockextraction}, $\mathsf{BlkExt}(\mathsf{Line}_{2,q^{d_1}},\dots,\mathsf{Line}_{2,q^{d_s}})$ has degree $2^s$. By \thmref{thmconverter}, $\mathsf{BlkCnvt}_{n,(4d_1,\dots,4d_s),q}$ has degree $n$.  Therefore $\mathsf{OuterSamp}_{m,\delta,q}$ has degree $n2^s=O\lb m^2+m\log_q(1/\delta)\rb$.
\end{proof}

\againlemma{lemresampling}{\lemresampling}

\begin{proof}
Consider an arbitrary subset $A\subseteq\mathbb{F}_q^{d_0}$.
Define $B(x)=\left\{z\in\mathbb{F}_q^{d_2}: S_1(x,z)\in A\right\}$ for each $x\in\mathbb{F}_q^{n_1}$.
Pick $x_1\leftarrow U_{n_1,q}$ and $x_2\leftarrow U_{n_2,q}$. If $\left|\mu_{S_1\circ S_2((x_1,x_2))}(A)-\mu(A)\right|>\epsilon_1+\epsilon_2$ occurs, then either $|\mu_{S_1(x_1)}(A)-\mu(A)|>\epsilon_1$, or
$|\mu_{S_1\circ S_2((x_1,x_2))}(A)-\mu_{S_1(x_1)}(A)|>\epsilon_2$ occurs. Call the two events $E_1$ and $E_2$ respectively.

Note that $E_1$ occurs with probability at most $\delta_1$ by the sampling property of $S_1$.
Also note that
$$
\mu_{S_1\circ S_2((x_1,x_2))}(A)=\Pr_y[S_1(x_1,S_2(x_2,y))\in A]=\Pr_y[S_2(x_2,y)\in B(x_1)]=\mu_{S_2(x_2)}(B(x_1))
$$
whereas
$$
\mu_{S_1(x_1)}(A)=\Pr_y[S_1(x_1,y)\in A]=\Pr_y[y\in B(x_1)]=\mu(B(x_1)).
$$
So the probability that $E_2$ occurs is
$
\Pr_{x_1,x_2}[|\mu_{S_2(x_2)}(B(x_1))-\mu(B(x_1))|>\epsilon_2]
$ which is bounded by $\delta_2$ by the sampling property of $S_2$.
By the union bound, the event
$$
\left|\mu_{S_1\circ S_2((x_1,x_2))}(A)-\mu(A)\right|>\epsilon_1+\epsilon_2
$$
occurs with probability at most $\delta_1+\delta_2$, as desired.

Finally, we have $S_1\circ S_2((X_1,X_2),Y))=S_1(X_1,S_2(X_2,Y))$ which has degree $t_1 t_2$ in its variables $X_1,X_2,Y$ since $S_1$ and $S_2$ have degree $t_1$ and $t_2$ respectively.
\end{proof}

%
%

\end{document}